\newtheorem{lemma}{Lemma}
\newtheorem{Corollary}{Corollary}
\newtheorem{theorem}{Theorem}
\newtheorem{proposition}{Proposition}
\newtheorem{definition}{Definition}
\theoremstyle{definition}
\newtheorem{example}{Example}
\newcommand{\conv}{\operatorname{conv}}
\newcommand{\PG}{\operatorname{PG}}
\newcommand{\vek}[1]{\mathbf{#1}}
\newcommand{\mat}[1]{\mathbf{#1}}
\newcommand{\removelatexerror}{\let\@latex@error\@gobble}
\newcommand{\cS}{\mathcal{S}}
\newcommand{\cC}{\mathcal{C}}
\newcommand{\cE}{\mathcal{E}}
\newcommand{\cH}{\mathcal{H}}
\newcommand{\cY}{\mathcal{Y}}
\newcommand{\cG}{\mathcal{G}}
\newcommand{\cR}{\mathcal{R}}
\newcommand{\cV}{\mathcal{V}}
\newcommand{\cX}{\mathcal{X}}
\newcommand{\bG}{\mathbf{G}}
\newcommand{\bg}{\mathbf{g}}
\newcommand{\be}{\mathbf{e}}
\newcommand{\bn}{\mathbf{n}}
\newcommand{\bv}{\mathbf{v}}
\newcommand{\bbF}{\mathbb{F}}
\newcommand{\bbN}{\mathbb{N}}
\newcommand{\bbZ}{\mathbb{Z}}
\newcommand{\bbRp}{\mathbb{R}_{\ge 0}}
\newcommand{\bbR}{\mathbb{R}}
\newcommand{\bx}{\mathbf{x}}
\newcommand{\by}{\mathbf{y}}
\newcommand{\blambda}{\boldsymbol{\lambda}}
\newcommand{\simplex}{\operatorname{Simpl}}
\renewcommand{\mathsf}[1]{#1}
\begin{document}

\pagestyle{plain}

\title{\LARGE \textbf{\Large Efficient Storage Schemes for Desired Service Rate Regions
}}

\author{\normalsize {$^\ast$}Fatemeh Kazemi, {$^\dagger$}Sascha Kurz, {$^\ddagger$}Emina Soljanin, and {$^\ast$}Alex Sprintson \\{\small {$^\ast$}Dept. of ECE, Texas A\&M University, USA (E-mail: \{fatemeh.kazemi, spalex\}@tamu.edu)}\\{\small {$^\dagger$} Dept. of Mathematics, University of Bayreuth, Germany (E-mail: sascha.kurz@uni-bayreuth.de)}\\ {\small {$^\ddagger$}Dept. of ECE, Rutgers University, USA (E-mail: emina.soljanin@rutgers.edu)}\thanks{Part of this research is based upon work supported by the National Science Foundation under Grants No. CIF-1717314, as well as work while some authors were in residence at the Schloss Dagstuhl Research Institute during the Algebraic Coding Theory for Networks, Storage, and Security Seminar in 2018. The authors thank Yiwei Zhang for helpful discussion.}}

\maketitle 

\thispagestyle{plain}

\begin{abstract} 
A major concern in cloud/edge storage systems is serving a large number of users simultaneously. The service rate region is introduced recently as an important performance metric for coded distributed systems, which is defined as the set of all data access requests that can be simultaneously handled by the system. This paper studies the problem of designing a coded distributed storage system storing $k$ files where a desired service rate region $\cR$ of the system is given and the goal is 1) to determine the minimum number of storage nodes ${n(\cR)}$ (or a lower bound on ${n(\mathcal{R})}$) for serving all demand vectors inside the set $\cR$ and 2) to design the most storage-efficient redundancy scheme with the service rate region covering $\cR$. Towards this goal, we propose three general lower bounds for $n(\cR)$. Also, for ${k=2}$, we characterize ${n(\cR)}$, i.e., we show that the proposed lower bounds are tight via designing a novel storage-efficient redundancy scheme with ${n(\cR)}$ storage nodes and the service rate region covering $\cR$.    
\end{abstract}

\section{Introduction}

\textbf{Motivation:} 
The past two decades have seen an explosive growth in the amount of data stored in the cloud data centers which was accompanied by a rapid increase in the volume of users accessing it. To handle these surging demands in a fast, reliable and efficient manner, chunks of a data object are stored redundantly over multiple storage nodes through either replication or erasure coding. Although replication has been typically preferred due its simplicity, it can be expensive in terms of storage. Erasure codes have been shown to be effective in achieving various goals such as providing reliability against node failures (see e.g., \cite{dimakis2010network,dimakis2011survey,rashmi2009explicit,rashmi2011optimal}), ensuring availability of stored content during high demand (see e.g., \cite{rabinovich2002web,maddah2016coding,shanmugam2013femtocaching,tran2017mobee}), enabling the recovery of a data object from multiple disjoint groups of nodes (see e.g., \cite{rawat2016locality,tamo2014bounds,papailiopoulos2014locally}), and providing fast content download (see e.g., \cite{joshi2012coding,liang2014fast,latency:JoshiSW15,latency:JoshiSW17,shah2014mds,gardner2015reducing,shah2015redundant,kadhe2015analyzing,aktas2019analyzing}).

Serving a large number of users simultaneously is a major concern in cloud storage systems and so is considered as one of the most significant considerations in the design of coded distributed systems. The service rate region has been recently recognized as an important performance metric for coded distributed systems, which is defined as the set of all data access requests that can be simultaneously served by the system \cite{aktas2020service,aktacs2017service,anderson2018service,ServiceCombinatorial:KazemiKSS20,ServiceGeometric:KazemiKS20, peng2018distributed,noori2016storage}. It has been shown that erasure coding of data objects can increase the overall volume of the service rate region through handling skews in the request rates more flexibly  \cite{aktas2020service,anderson2018service,aktacs2017service}.

The service rate problem considers a distributed storage system in which $k$ files $f_1,\dots,f_k$ are stored across $n$ servers using a linear ${[n,k]_q}$ code. The requests to download file $f_i$ arrive at rate $\lambda_i$, and the service rate of each server is $\mu$. A goal of the service rate problem is to determine the service rate region of this  system which is the set of all request rates $\boldsymbol{\lambda}=(\lambda_1,\dots,\lambda_k)$ that can be served by this system.

\vspace{0.2cm}
\textbf{Previous Work:} 
All the existing studies on the service rate problem focus on characterizing the service rate region of a given coded storage scheme and finding the optimal request allocation, that is, the optimal policies for splitting incoming requests across the nodes to maximize the service rate region (see \cite{aktas2020service}). In~\cite{aktacs2017service}, the service rate region was characterized for MDS codes when $n \geq 2k$, binary simplex codes and systems with arbitrary $n$ when ${k=2}$ . The service rate region of the systems with arbitrary $n$ when ${k=3}$ was determined in~\cite{anderson2018service}. A connection between the service rate problem and the fractional matching problem is established in \cite{ServiceCombinatorial:KazemiKSS20}. Also, it has been shown that the service rate problem can be viewed as a generalization of the multiset primitive batch codes problem. In \cite{ServiceGeometric:KazemiKS20}, we characterized the service rate regions of the binary first order Reed-Muller codes and binary simplex codes using a novel geometric technique. Also, we showed that given the service rate region of a code, a lower bound on the minimum distance of the code can be derived.

\vspace{0.2cm}
\textbf{Main Contributions:}
In this paper, we consider a practical setting of designing a coded distributed storage system where we are asked to store $k$ files redundantly across some number of storage nodes in the system. Also, we are given a bounded subset $\mathcal{R} \subset \mathbb{R}^k_{\geq 0}$ as a desired service rate region for this distributed storage system. Our goal is: 1) to find the minimum number of storage nodes $n(\mathcal{R})$ (or a lower bound on $n(\mathcal{R})$) required for serving all demand vectors $\boldsymbol{\lambda}$ inside the desired service rate region $\mathcal{R}$, and 2) to design the most storage-efficient redundancy scheme with the service rate region covering the set $\mathcal{R}$. In fact, in this paper, unlike the existing work, we focus on designing the underlying erasure code for covering a given service rate region with minimum storage. Towards this goal, we propose three different general lower bounds for $n(\mathcal{R})$. Also, we show that for $k=2$, these bounds are tight and we design an efficient storage scheme that achieves the desired service rate region while minimizing the storage. \textit{All the proofs can be found in the Appendix}.

\section{Problem Setup and Formulation}\label{sec:Problem Statement}

\subsection{Basic Notation}
Throughout this paper, we denote vectors by bold-face lower-case letters and matrices by bold-face capital letters. Let ${\bbZ_{\geq 0}}$ and $\bbN$, respectively, denote the set of non-negative integers, and the set of positive integers. For ${k\in \bbN}$, let $\vek{0}_k$ and $\vek{1}_k$, respectively, denote the all-zero and all-one column vectors of length $k$. Let $\vek{e}_i$ be a unit vector of length $k$, having a one at position $i$ and zeros elsewhere. For any $i \in \bbN$, we define $[i]\triangleq\{1,\dots,i\}$. Let $\bbF_q$ be the finite field of order $q$, and $\bbF_q^n$ be the $n$-dimensional vector space over $\bbF_q$. Let $[n,k]_q$ denote a q-ary linear code $\cC$ of length $n$ and dimension $k$. We denote the cardinality of a set or multiset $\cS$ by $\#\cS$. Let $\left\langle \cS \right\rangle$ and $\conv(\cS)$, respectively, denote the span and the convex hull of the set $\cS$ of vectors. For two vectors $\bx=(x_1,\dots,x_k)$ and $\by=(y_1,\dots,y_k)$, let $\bx\le \by$ define $x_i\le y_i$ for all $i\in [k]$.

\subsection{Coded Storage System} 

Consider a coded storage system wherein $k$ files $f_1,\dots,f_k$ are stored redundantly across $n$ servers using a linear code of length $n$ and dimension $k$ over $\mathbb{F}_q$ with generator matrix $\bG$. Suppose all files are of the same size, and all servers have a storage capacity of one file. A set $Y$ is a recovery set for file $f_i$ if the unit vector $\be_i$ can be recovered through a linear combination of the columns of $\bG$ indexed by the set $Y$, i.e., if there exist coefficients $\alpha_j$'s $\in \mathbb{F}_q$ such that ${\sum_{j \in Y}\alpha_j \bg_j=\mathbf{e}_i}$ where $\bg_j$ denotes the $j$th column of $\bG$. For each file, w.l.o.g. we consider reduced recovery sets defined as the recovery sets that are not a proper superset of any other recovery sets for that file. In other words, the reduced recovery sets are obtained by considering non-zero coefficients $\alpha_j$'s and linearly independent columns $\bg_j$'s. Let ${\cY_{i}=\{Y_{i,1},\dots,Y_{i,t_i}\}}$ denote the $t_i$ recovery sets for file ${f_i}$. 

We assume that the service rate of each server is $\mu$, i.e., each server can resolve the received requests at the average rate $\mu$. We further assume that the requests to download file $f_i$ arrive at rate $\lambda_i$, $i\in [k]$. The request arrival rates for the $k$ files are denoted by the demand vector $\blambda=(\lambda_1,\dots,\lambda_k)$. We consider the class of scheduling strategies that assign a fraction of requests for a file to each of its recovery sets. Let $\lambda_{i,j}$ be the portion of requests for file $f_i$ that is assigned to the recovery set $Y_{i,j}$, $j\in [t_i]$.

\subsection{Service Rate Region}
The demand vector $\blambda$ can be served by a coded distributed storage system with generator matrix $\bG\in\bbF_q^{k\times n}$ and service rate ${\mu}$ iff there exists a set ${\{\lambda_{i,j}:i\in [k], j\in [t_i]\}}$, referred to as a valid allocation, that satisfies the follwing constraints:

\begin{subequations}\label{eq:1}
\begin{align}\label{condition_demand}
&\sum_{j=1}^{t_i} \lambda_{i,j}=\lambda_i, ~~~~~~~~~~~ \text{for all} ~~~ i\in [k], \\
&\sum_{i=1}^{k}\sum_{\substack{{j\in [t_i]} \\  \label{condition_capacity} {\hspace{0.15cm}\ell \in Y_{i,j}}}} \lambda_{i,j} \leq \mu, ~~~~ \text{for all} ~~~ \ell\in [n],\\
&\hspace{0.1cm}\lambda_{i,j} \in \mathbb{R}_{\geq 0},~~~~~~~~~~~~~~ \text{for all} ~~~ i\in [k],~j\in [t_i].\label{condition_pos}
\end{align}
\end{subequations}

The constraints~\eqref{condition_demand} guarantee that the demands for all files are served, and the constraints~\eqref{condition_capacity} ensure that the total rates assigned to each server is not more than its service rate. 

The \emph{service rate region} of an erasure coded storage system with the generator matrix $\bG$ and service rate $\mu$, denoted by $\cS(\bG,{\mu}) \subseteq \bbR^k_{\geq 0}$, is defined as the set of all demand vectors $\blambda$ that can be served by the system. In what follows, w.l.o.g. we assume that $\mu=1$ and abbreviate $\cS(\bG,1)$ as $\cS(\bG)$.

Note that there are several generator matrices that span the same linear code, i.e., whenever the row span of two matrices $\bG$ and $\bG'$ coincides, they span the same code. However, the service rate regions of generator matrices $\bG$ and $\bG'$ of the same linear code might not be the same, i.e., $\cS(\bG)\neq \cS(\bG')$.

\subsection{Geometric description of Linear Codes}
Here, we briefly review some preliminaries regarding the notions of projective space, multiset, and projective multisets induced by linear codes that we will use in Sec.\ref{sec:geomservice}. For more details, see~\cite{tsfasman1995geometric,dodunekov1998codes,beutelspacher1998projective}.

\begin{definition}
For a vector space $\cV$ of dimension $v$ over $\bbF_q$, the \ul{projective space} of $\cV$, denoted as ${\operatorname{PG}(\cV)}$, is the set of equivalence classes of $\cV\setminus\{\vek{0}_v\}$ under the equivalence relation $\sim$ defined as ${x} \sim {y}$ if there is a non-zero element ${\alpha \in \mathbb{F}_q}$ such that ${x}=\alpha{y}$.
\end{definition}

We remark that the $1$-dimensional subspaces of ${\cV}$ are the points of the projective space ${\operatorname{PG}(\cV)}$. The $2$-dimensional subspaces of $\cV$ are the lines of ${\operatorname{PG}(\cV)}$ and the ${v-1}$ dimensional subspaces of $\cV$ are called the hyperplanes of $\operatorname{PG}(\cV)$. 

For a vector space ${\cV}$ of (\emph{geometric}) dimension $v$ over ${\bbF_q}$, the projective space ${\operatorname{PG}(\cV)}$ is also denoted by $\operatorname{PG}(v-1,q)$, referred to as the projective space of (\emph{algebraic}) dimension ${v-1}$ over ${\bbF_q}$. This notion makes sense since up to isomorphism, the ${\PG(\cV)}$ only depends on the order $q$ of the base field and the dimension $v$ of the vector space $\cV$. Thus, $\operatorname{PG}(v-1,q)$ can be defined as the set of $v$-tuples of elements of ${\bbF_q}$, not all zero, under the equivalence relation given by $(x_1,\cdots,x_v)\sim(\alpha x_1,\cdots,\alpha x_v)$, $\alpha \neq 0$, $\alpha \in \bbF_q$. The definition implies that if $(x_1,\cdots,x_{v})$ is a point in $\operatorname{PG}(v-1,q)$, its scalar multiple (by any non-zero scalar $\alpha \in \bbF_q$) $(\alpha x_1,\cdots,\alpha x_{v})$ is the same point in $\operatorname{PG}(v-1,q)$.

A \textit{multiset}, unlike a set, allows for multiple instances for each of its elements. A multiset $\cS$ on a base set $\cX$ is defined with its characteristic function, denoted as ${\chi_{\cS} : \cX \to \bbN}$, mapping $x \in \cX$ to the multiplicity of $x$ in $\cS$. The cardinality of the multiset $\cS$ is computed as $\#\cS = \sum_{x\in X} \chi_{\cS}(x)$. The multiset $\cS$ is also called \emph{$\#\cS$-multiset}. As a simple example, consider the multiset $\cS=\{a,a,b,b,b,c\}$ on the base set $\cX=\{a,b,c\}$ that is identified with $\chi_{\cS}(a)=2$, $\chi_{\cS}(b)=3$ and $\chi_{\cS}(c)=1$.

Let $\mat{G}$ be the generator matrix of an $[n,k]_q$ code $\cC$ that is a $k$-dimensional subspace of the $n$-dimensional vector space $\bbF_q^n$. Let $\bg_i$, ${i\in [n]}$ be the $i$th column of $\mat{G}$. Then, each $\bg_i$ is a point in the projective space ${\operatorname{PG}(k-1,q)}$, and $\cG:=\{{g}_1,{g}_2,\dots,{g}_n\}$ is an $n$-multiset of points in ${\operatorname{PG}(k-1,q)}$ where each point is counted with the appropriate multiplicity. In general, $\cG$ is called the $n$-multiset induced by $\cC$.

\begin{proposition}\label{prop:corres}
There exists a one-to-one correspondence between the equivalence classes of full-length q-ary linear codes and the projective equivalence classes of multisets in finite projective spaces. 
\end{proposition}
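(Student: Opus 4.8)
The plan is to describe the correspondence at the level of representatives and then verify that it is well defined and invertible once we pass to the relevant equivalence classes. Starting from a full-length $[n,k]_q$ code $\cC$ --- i.e.\ one whose generator matrices have no zero column --- fix any generator matrix $\bG\in\bbF_q^{k\times n}$; since $\cC$ has dimension $k$, $\bG$ has rank $k$, and by full-length it has no zero column, so its columns $\bg_1,\dots,\bg_n$ are $n$ well-defined points (counted with multiplicity) in $\PG(k-1,q)$ whose underlying set spans the whole space. This produces the multiset $\cG$ induced by $\cC$. Conversely, given an $n$-multiset $\cG$ of points in $\PG(k-1,q)$ whose underlying points span $\PG(k-1,q)$, choose an arbitrary nonzero representative vector in $\bbF_q^k$ for each occurrence of each point and stack these as the columns of a matrix $\bG\in\bbF_q^{k\times n}$; the spanning hypothesis forces $\operatorname{rank}\bG=k$ and no representative is the zero vector, so $\bG$ generates a full-length $[n,k]_q$ code whose induced multiset is exactly $\cG$. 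Thus both assignments are surjective at the level of representatives, and the substance of the proposition is that they descend to mutually inverse bijections on equivalence classes.

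Next I would pin down the two equivalence relations and check they match. On the code side the natural relation is generated by (i) replacing $\bG$ by another generator matrix of the same code, which amounts to $\bG\mapsto\mathbf{A}\bG$ with $\mathbf{A}\in\mathrm{GL}_k(q)$, and (ii) a monomial transformation of the coordinates, $\bG\mapsto\bG\,\mathbf{D}\mathbf{P}$ with $\mathbf{D}$ an invertible diagonal matrix and $\mathbf{P}$ a permutation matrix (so that $\cC$ and $\cC'$ are \emph{monomially equivalent}); on the geometry side the relation is the action of the projective group $\mathrm{PGL}_k(q)$ on $\PG(k-1,q)$, extended to multisets so as to respect multiplicities. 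The map $\bG\mapsto\mathbf{A}\bG$ sends $\bg_j\mapsto\mathbf{A}\bg_j$, which is precisely the image of $\cG$ under the projectivity induced by $\mathbf{A}$; permuting columns leaves the multiset unchanged; and rescaling a column by a nonzero scalar does not change the corresponding projective point. Hence monomially equivalent codes induce projectively equivalent multisets, so the first assignment descends to a well-defined map on classes; reading the same three observations in reverse shows that two choices of representatives for a fixed multiset give monomially equivalent codes, and that projectively equivalent multisets give monomially equivalent codes, so the second assignment descends as well.

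It then remains to check that the two resulting maps on classes are mutually inverse. Composing code $\to$ multiset $\to$ code replaces a generator matrix $\bG$ by $\bG$ with some columns rescaled by nonzero scalars, which by definition represents the same monomial-equivalence class of codes; composing multiset $\to$ code $\to$ multiset returns $\cG$ unchanged. For injectivity one can also argue directly: if the multisets induced by $\bG$ and $\bG'$ are projectively equivalent, say via the projectivity induced by some $\mathbf{A}\in\mathrm{GL}_k(q)$ matching the column-$j$ point of $\bG$ to the column-$\sigma(j)$ point of $\bG'$ for a permutation $\sigma$ of $[n]$, then $\mathbf{A}\bg_j=c_j\,\bg'_{\sigma(j)}$ for nonzero scalars $c_j$, i.e.\ $\mathbf{A}\bG=\bG'\mathbf{D}\mathbf{P}$ for a suitable invertible diagonal $\mathbf{D}$ and permutation matrix $\mathbf{P}$; hence the code generated by $\bG$ coincides with the one generated by $\bG'\mathbf{D}\mathbf{P}$, which is monomially equivalent to the code generated by $\bG'$.

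The main obstacle here is not any single computation but getting the two equivalence relations to line up and keeping the bookkeeping straight --- in particular, that ``full-length'' on the code side corresponds exactly to ``no occurring point is the zero vector'' and that ``dimension $k$'' corresponds to ``the occurring points span $\PG(k-1,q)$'', so that ranks and lengths come out right in both directions. Once that correspondence is fixed, the rest reduces to elementary manipulations with matrices in $\mathrm{GL}_k(q)$ together with diagonal and permutation matrices. If one additionally wishes to allow field automorphisms, the same argument goes through verbatim with $\mathrm{PGL}_k(q)$ replaced by the semilinear group $\mathrm{P\Gamma L}_k(q)$ and monomial equivalence of codes replaced by semilinear equivalence.
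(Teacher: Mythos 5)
Your proposal is correct, but there is nothing in the paper to compare it against: the paper states Proposition~\ref{prop:corres} as a classical fact and refers the reader to the literature (\cite{tsfasman1995geometric,dodunekov1998codes,beutelspacher1998projective}); no proof appears in the appendix. What you have written is the standard self-contained argument, and it is sound: the identification of ``full-length'' with ``no zero column,'' of ``dimension $k$'' with ``the occurring points span $\PG(k-1,q)$,'' the verification that row operations $\bG\mapsto\mathbf{A}\bG$ with $\mathbf{A}\in\mathrm{GL}_k(q)$ induce projectivities while column scalings and permutations leave the multiset fixed, and the converse factorization $\mathbf{A}\bG=\bG'\mathbf{M}$ with $\mathbf{M}$ monomial are exactly the right ingredients, and together they show the two assignments descend to mutually inverse bijections on classes. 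Two cosmetic points only: in the round trip code $\to$ multiset $\to$ code the recovered matrix may have its columns permuted as well as rescaled (the multiset forgets the ordering), which you should say explicitly even though it changes nothing since permutation is part of monomial equivalence; and the order of $\mathbf{D}$ and $\mathbf{P}$ in your factorization should be checked, though $\mathbf{D}\mathbf{P}=\mathbf{P}\mathbf{D}'$ for another invertible diagonal $\mathbf{D}'$, so either convention yields a monomial matrix. Your proof supplies a justification the paper chose to outsource, which is useful if one wants the exposition self-contained.
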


An $[n,k]_q$ code can be described by a generator matrix $\bG$ or as discussed by an n-multiset $\cG$ of points in $\PG(k-1,q)$. In what follows, for the ease of notation, we restrict ourselves to the binary field. We associate the points of $\PG(k-1,2)$ with the non-zero vectors in $\bbF_2^k \setminus \{\vek{0}_k\}$, then we interpret each such vector as the binary expansion of the corresponding integer $i\in [\ell]$ where $\ell:=2^k-1$. We denote by $\bv_i$ the vector corresponding to the integer $i\in[\ell]$. As two examples, in $\bbF_2^3 \setminus \{\vek{0}_3\}$, the vectors $\bv_3=(0,1,1)$ and $\bv_4=(1,0,0)$ are corresponding to the integers $3$ and $4$, respectively. In order to uniquely characterize a multiset of points $\cG$ in $\PG(k-1,2)$, we use multiplicities $n_i\in \bbZ_{\geq 0}$, $i\in[\ell]$, counting the number of occurrences of the vector $\bv_i \in \bbF_2^k\setminus \{\vek{0}_k\}$, $i\in[\ell]$, in the generator matrix $\bG$. Thus, we have $\sum_{i\in[\ell]} n_i=n$. Also, due to the correspondence between a generator matrix $\bG$ and a multiset of points $\cG$ (based on the Proposition~\ref{prop:corres}), we can write $\cS(\cG)$ instead of $\cS(\bG)$ for the service rate region and we will directly define $\cS(\cG)$ shortly.

\subsection{Geometric Interpretation of the Service Rate Region}\label{sec:geomservice}

A recovery set for file $f_i$, $i\in [k]$, is a subset $Y\subseteq [\ell]$ such that the span of the set $ \left\{\bv_j\mid j\in Y\right\}$ contains the unit vector $\be_i$. A recovery set $Y$ for $f_i$ is called reduced if there does not exist a proper subset $Y'\subsetneq Y$ with $\be_i\in \left\langle \left\{\bv_j\mid j\in Y'\right\}\right\rangle$. For $q=2$ and a reduced recovery set $Y$, there is no need to specify the index $i$ of the file that is recovered since $\sum_{j\in Y} \bv_j=\be_i$. However, this is not necessarily true for $q > 2$. As an example, in $\bbF_3$ the set $\left\{\be_1+\be_2,\be_1+2\be_2\right\}$ spans a $2$-dimensional subspace containing both $\be_1$ and $\be_2$, while none of these two unit vectors are contained in the span of a proper subset. Since we assume that $q=2$, we will mostly speak just of a recovery set without specifying the index $i$ of the file that it recovers. By $\cY_i$, we denote the set of all reduced recovery sets for file $f_i$, where $i\in[k]$. As an example, for $k=3$ we have $\cY_2=\{\{2\},\{4,6\},\{1,3\},\{5,7\},\{1,4,7\}\}$. Note that the maximum cardinality of a reduced recovery set is $k$, which can indeed be attained. 

Let $\alpha_{i,Y}$ be the portion of request rates for file $f_i$ assigned to the recovery set $Y\in \cY_i$. Given a multiset of points $\cG$ in $\PG(k-1,2)$, described by the multiplicities $n_j$, $j\in[\ell]$, the service rate region $\cS(\cG)$ is the set of all vectors $\blambda\in\bbRp^k$ for which there exist $\alpha_{i,Y}$'s, satisfying the following constraints: 
\begin{subequations}\label{eq:11}
\begin{align}\label{condition_demand_2}
&\sum_{Y\in\cY_i} \alpha_{i,Y}=\lambda_i, ~~~~~~~~~~~ \text{for all} ~~~ i\in [k], \\
&\sum_{i=1}^{k}\sum_{\substack{{Y\in\cY_i} \\  \label{condition_capacity_2} {\hspace{0.05cm} j\in Y}}} \alpha_{i,Y} \leq n_j, ~~~~~~ \text{for all} ~~~ j\in [\ell],\\
&\hspace{0.1cm}\alpha_{i,Y} \in \mathbb{R}_{\geq 0},~~~~~~~~~~~~~~~ \text{for all} ~~~ i\in [k],~Y\in\cY_i.\label{condition_pos_2}
\end{align}
\end{subequations}

Recall that the constraints~\eqref{condition_demand_2} guarantee that the demands for all files are served, and constraints~\eqref{condition_capacity_2} certify that no node receives requests at a rate in excess of its service rate. 

As noted, for $q=2$, each reduced recovery set uniquely characterizes the file it recovers, that is, $\cY_i$'s where $i \in [k]$ are pairwise disjoint and form a partition of $\cY:=\cup_{i\in[k]} \cY_i$. With this we can simplify the above characterization, i.e., the service rate region $\cS(\cG)$ is the set of all vectors $\blambda\in\bbRp^k$ for which there exists $\alpha_{Y}$, satisfying the following constraints: 
\begin{subequations}\label{eq:2}
\begin{align}\label{condition_demand_3}
&\sum_{Y\in\cY_i} \alpha_{Y}\geq\lambda_i, ~~~~~~~~~~~ \text{for all} ~~~ i\in [k], \\
&\sum_{\substack{{Y\in\cY} \\  \label{condition_capacity_3} {\hspace{0.05cm} j\in Y}}} \alpha_{Y} \leq n_j, ~~~~~~~~~~~ \text{for all} ~~~ j\in [\ell],\\
&\hspace{0.1cm}\alpha_{Y} \in \mathbb{R}_{\geq 0},~~~~~~~~~~~~~~~ \text{for all} ~~~ i\in [k],~Y\in\cY_i.\label{condition_pos_3}
\end{align}
\end{subequations}


\subsection{Problem Statement}

After these preparations, we can state the problems that we explore to address in this paper. Consider a practical scenario where we are asked to store $k$ files redundantly across some number of nodes in a coded distributed storage system. Also, we are given a bounded subset $\mathcal{R} \subset \mathbb{R}^k_{\geq 0}$ as a desired service rate region for this distributed storage system. Two natural questions arising in the design of this storage system are the following: 1) What is the minimum number $n(\mathcal{R})$ of storage nodes (or servers) required for serving all demand vectors $\boldsymbol{\lambda}$ inside the desired service rate region $\mathcal{R}$? 2) What is the most storage-efficient redundancy scheme with the service rate region covering the set $\mathcal{R}$ (i.e., how should the files be stored redundantly in $n(\mathcal{R})$ storage nodes)?

In other words, for each desired service rate region $\mathcal{R}\subset\mathbb{R}_{\ge 0}^k$, the goal is to characterize the minimum number of nodes $n(\mathcal{R})$ (or derive a lower bound on $n(\mathcal{R})$) such that there exists a generator matrix $\bG$ with $\mathcal{R}\subseteq\mathcal{S}(\bG)$ (or alternatively, a multiset of points $\mathcal{G}$ in $\operatorname{PG}(k-1,q)$ with $\mathcal{R}\subseteq\mathcal{S}(\mathcal{G})$). Thus, deriving lower bounds and constructive upper bounds for $n(\mathcal{R})$ is of great significance in the context of designing distributed storage systems, which we aim to address in this paper.

Recall that $\cS(\bG)\neq \cS(\bG')$ (or $\cS(\cG)\neq \cS(\cG')$) even if $\bG,\bG'$ (or $\cG,\cG'$) generate the same linear code. Thus, to be more precise, instead of designing an efficient code, we have to speak of the construction of storage-efficient generator matrices or multisets of points.

\section{Main Results}

In this section, first we investigate a few structural properties and formulate the problem of determining $n(\cR)$. Then, using a geometric approach, we derive multiple lower bounds on $n(\cR)$ and finally we show that for $k=2$ the derived lower bounds are tight by proposing an storage-efficient redundancy scheme.

\subsection{Structural Properties of the Service Rate Region}

Here, before we present integer linear programming (ILP) formulations for the determination of $n(\cR)$ we first study a few structural properties.   

\begin{lemma}\label{prop:1}
  For $\mathcal{R} \subset \mathbb{R}^k_{\geq 0}$, we have $n(\cR)=n(\conv(\cR))$.
\end{lemma}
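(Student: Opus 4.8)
The claim is that $n(\mathcal{R}) = n(\operatorname{conv}(\mathcal{R}))$. Since $\mathcal{R} \subseteq \operatorname{conv}(\mathcal{R})$, any generator matrix whose service rate region covers $\operatorname{conv}(\mathcal{R})$ also covers $\mathcal{R}$, so $n(\mathcal{R}) \le n(\operatorname{conv}(\mathcal{R}))$ is immediate. The content is the reverse inequality: if $\mathbf{G}$ (equivalently a multiset $\mathcal{G}$) with $n$ columns satisfies $\mathcal{R} \subseteq \mathcal{S}(\mathbf{G})$, then in fact $\operatorname{conv}(\mathcal{R}) \subseteq \mathcal{S}(\mathbf{G})$, so the same $n$ works for $\operatorname{conv}(\mathcal{R})$.

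The key observation I would isolate as a sub-step is that $\mathcal{S}(\mathbf{G})$ is itself a convex set. This follows directly from the LP description in~\eqref{eq:2} (or~\eqref{eq:11}): given $\boldsymbol{\lambda}, \boldsymbol{\lambda}' \in \mathcal{S}(\mathbf{G})$ with valid allocations $\{\alpha_Y\}$ and $\{\alpha_Y'\}$ respectively, for any $\theta \in [0,1]$ the convex combination $\{\theta\alpha_Y + (1-\theta)\alpha_Y'\}$ is nonnegative, its sums over $\cY_i$ dominate $\theta\lambda_i + (1-\theta)\lambda_i'$, and its loaded rate at each point $j$ is at most $\theta n_j + (1-\theta)n_j = n_j$. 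Hence $\theta\boldsymbol{\lambda} + (1-\theta)\boldsymbol{\lambda}' \in \mathcal{S}(\mathbf{G})$. So $\mathcal{S}(\mathbf{G})$ is convex (indeed a polyhedron). Then from $\mathcal{R} \subseteq \mathcal{S}(\mathbf{G})$ and convexity of $\mathcal{S}(\mathbf{G})$ we get $\operatorname{conv}(\mathcal{R}) \subseteq \operatorname{conv}(\mathcal{S}(\mathbf{G})) = \mathcal{S}(\mathbf{G})$, which finishes the argument: the witness $\mathbf{G}$ for $n(\mathcal{R})$ is also a witness for $\operatorname{conv}(\mathcal{R})$, giving $n(\operatorname{conv}(\mathcal{R})) \le n(\mathcal{R})$.

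Combining the two inequalities yields the claim. I do not expect a serious obstacle here; the only things to be slightly careful about are (i) confirming that an optimal/feasible $\mathbf{G}$ attaining $n(\mathcal{R})$ exists so that "the witness" is well-defined — but this is a non-issue since $n(\mathcal{R})$ is by definition a minimum over a nonempty set of positive integers, hence attained — and (ii) making sure the convex-combination argument uses the right formulation; I would phrase it with~\eqref{eq:2} since there the allocations $\alpha_Y$ are unindexed by file and the bookkeeping is cleanest. A minor remark worth adding is that $\mathcal{S}(\mathbf{G})$ being a polyhedron means $\operatorname{conv}(\mathcal{R})$ need not itself be closed or polyhedral for the statement to hold, since we only ever use $\operatorname{conv}(\mathcal{R}) \subseteq \mathcal{S}(\mathbf{G})$ and $\mathcal{S}(\mathbf{G})$ is closed and convex, so even $\overline{\operatorname{conv}(\mathcal{R})}$ is contained in it; this strengthening could be stated if useful downstream but is not needed for the lemma as written.
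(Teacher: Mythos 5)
Your proposal is correct and takes the same route as the paper: the paper's proof consists precisely of the observation that $\cS(\bG)$ is convex (citing an earlier work for details), from which both inequalities follow as you describe. You simply spell out the convex-combination argument for allocations and the trivial direction explicitly, which the paper leaves implicit.
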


\begin{definition}
For a set $S\subseteq \mathbb{R}_{\ge 0}^k$, the lower set $S\!\!\downarrow$ is defined as 
$S\!\!\downarrow:=\left\{\bx\in\mathbb{R}_{\ge 0}^k \mid \exists \by\in S: \bx\le \by\right\}$. 
\end{definition}

Consider the bounded subset $S=\conv(\{(0,0),(1,2),(2,1)\})\subset\mathbb{R}_{\ge 0}^2$ which is a triangle with area $\tfrac{3}{2}$. The corresponding lower set $S\!\!\downarrow=\conv(\{(0,0),(0,2),(2,0),(1,2),(2,1)\})$ is a pentagon with area $\tfrac{7}{2}$.

\begin{lemma}\label{prop:2} 
For a subset $\mathcal{R} \subset \mathbb{R}^k_{\geq 0}$, we have $n(\cR)=n(\cR\!\!\downarrow)$.
\end{lemma}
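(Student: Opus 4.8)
The plan is to establish the two inequalities $n(\cR)\le n(\cR\!\!\downarrow)$ and $n(\cR\!\!\downarrow)\le n(\cR)$ separately. The first is immediate: for every $\bx\in\cR$ we have $\bx\le\bx$, so $\cR\subseteq\cR\!\!\downarrow$, and the quantity $n(\cdot)$ is monotone under inclusion (a multiset of points $\cG$, equivalently a generator matrix, whose service rate region contains the larger set also serves every demand vector in the smaller set). Hence $n(\cR)\le n(\cR\!\!\downarrow)$. Note also that $\cR\!\!\downarrow$ is bounded whenever $\cR$ is (if $\cR\subseteq[0,M]^k$ then $\cR\!\!\downarrow\subseteq[0,M]^k$), so $n(\cR\!\!\downarrow)$ is well defined.

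For the reverse inequality the key step I would isolate is that every service rate region $\cS(\cG)$ is itself a lower set, i.e.\ $\cS(\cG)\!\!\downarrow=\cS(\cG)$. The inclusion $\cS(\cG)\subseteq\cS(\cG)\!\!\downarrow$ is trivial, so take $\blambda\in\cS(\cG)$ and $\bmu\in\bbRp^k$ with $\bmu\le\blambda$; I must produce a valid allocation for $\bmu$. Let $\{\alpha_Y\}_{Y\in\cY}$ be an allocation certifying $\blambda\in\cS(\cG)$ via \eqref{eq:2}. Since the demand constraints \eqref{condition_demand_3} are the inequalities $\sum_{Y\in\cY_i}\alpha_Y\ge\lambda_i\ge\mu_i$, and the capacity constraints \eqref{condition_capacity_3} together with the nonnegativity constraints \eqref{condition_pos_3} do not involve the demand vector at all, the very same family $\{\alpha_Y\}$ is a valid allocation for $\bmu$. (If one prefers the equality formulation \eqref{eq:11}, it suffices to rescale: replace $\alpha_{i,Y}$ by $(\mu_i/\lambda_i)\,\alpha_{i,Y}$ when $\lambda_i>0$ and by $0$ when $\lambda_i=0$; this only decreases the left-hand sides of the capacity constraints.) Thus $\bmu\in\cS(\cG)$, which proves the claim.

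Finally I would combine the two facts. Choose a multiset of points $\cG$ in $\PG(k-1,q)$ with $\#\cG=n(\cR)$ and $\cR\subseteq\cS(\cG)$. Applying the lower-set operation, which is itself monotone under inclusion, gives $\cR\!\!\downarrow\subseteq\cS(\cG)\!\!\downarrow=\cS(\cG)$ by the previous paragraph. Hence $\cG$ is a feasible storage scheme for $\cR\!\!\downarrow$ using $n(\cR)$ nodes, so $n(\cR\!\!\downarrow)\le n(\cR)$. Together with the first paragraph this yields $n(\cR)=n(\cR\!\!\downarrow)$. I do not expect a real obstacle here; the only points requiring (routine) care are the monotonicity of $S\mapsto S\!\!\downarrow$ used in the last step and the verification that $\cS(\cG)$ is downward closed, both of which follow directly from the linear-programming description \eqref{eq:2} of the service rate region.
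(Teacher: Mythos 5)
Your argument is correct and follows exactly the route the paper takes: the paper's proof consists of the single observation that $\cS(\bG)=\cS(\bG)\!\!\downarrow$ for every generator matrix, which is precisely the key step you isolate and then verify from the linear-programming description \eqref{eq:2}. Your write-up merely makes explicit the two inclusions and the monotonicity facts that the paper leaves implicit.
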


Taken the above two observations into account, we can parameterize a large class of reasonable subsets $\mathcal{R}\subset\mathbb{R}_{\ge 0}^k$ through a function $T\colon 2^{[k]}\to\mathbb{N}$ that maps the subsets of $[k]$ to integers in $\mathbb{N}$, where $T(\emptyset)=0$.

\begin{definition}\label{def:regionfunction}
Let  $T\colon 2^{[k]}\to\mathbb{N}$ with $T(\emptyset)=0$. We define
  $$
    \cR(T):=\left\{\blambda\in\mathbb{R}_{\ge 0}^k \mid \sum_{i\in S} \lambda_i\le T(S), \,\ \forall \emptyset\neq S\subseteq [k] \right\}
  $$
\end{definition}

By construction $\cR(T)$ is a convex polytope and $\cR(T)\!\!\downarrow=\cR(T)$, i.e., $\cR(T)$ is its own lower set. (For more details, see e.g.,~\cite{ServiceGeometric:KazemiKS20}.) It should be noted that in some cases, the values of the function ${T\colon 2^{[k]}\to\mathbb{N}}$ can be modified without changing $\cR(T)$. 

\begin{lemma}\label{lemma_characterization_fixpoint_algo_transform}
For each function $T\colon 2^{[k]}\to\mathbb{N}$ with $T(\emptyset)=0$, there exists a monotone and subadditive function $T'\colon 2^{[k]}\to\mathbb{N}$ with $T'(\emptyset)=0$, such that $R(T)=R(T')$.\footnote{$T\colon 2^{[k]}\to\mathbb{N}$ is monotone iff $T(U)\le T(V)$ for all $\emptyset\subseteq U\subseteq V\subseteq [k]$, and is subadditive iff $T(U\cup V) \leq T(U)+T(V)$.}
\end{lemma}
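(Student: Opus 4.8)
The plan is to give an explicit, constructive ``closure'' procedure that, starting from an arbitrary $T$, produces the required monotone subadditive $T'$ without altering the region $\cR(T)$. First I would define, for each $\emptyset\neq S\subseteq[k]$,
\[
T'(S):=\min_{\substack{S=S_1\cup\dots\cup S_m\\ \emptyset\neq S_j\subseteq[k]}}\ \sum_{j=1}^m T(S_j),
\]
where the minimum ranges over all finite covers (equivalently, partitions suffice, by discarding overlaps, since the $\lambda_i$ are nonnegative) of $S$ by nonempty subsets; set $T'(\emptyset)=0$. Taking the trivial cover $\{S\}$ shows $T'(S)\le T(S)$, and since $T$ has positive integer values the minimum is attained and lies in $\mathbb{N}$, so $T'\colon 2^{[k]}\to\mathbb{N}$ is well defined.

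The three properties are then checked in turn. \emph{Subadditivity:} given covers of $U$ and of $V$ achieving $T'(U)$ and $T'(V)$, their union is a cover of $U\cup V$, so $T'(U\cup V)\le T'(U)+T'(V)$. \emph{Monotonicity:} if $U\subseteq V$, take a cover of $U$ achieving $T'(U)$ and append the single set $V\setminus U$ (or, if $U=V$, do nothing); since all values of $T$ are positive, $T'(U)\le T'(U)+T(V\setminus U)$, and the enlarged family covers $V$, giving $T'(U)\le T'(V)$ — here I am using $T(V\setminus U)\ge 1>0$ together with the fact that adding a set to a cover can only increase the sum, which is why positivity of $T$ (not just nonnegativity) is needed. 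Finally, \emph{idempotence is not required}, but I should note that applying the construction again yields the same function, which is a quick consequence of subadditivity.

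The heart of the argument is $\cR(T)=\cR(T')$. The inclusion $\cR(T')\subseteq\cR(T)$ is immediate from $T'(S)\le T(S)$. For the reverse inclusion, let $\blambda\in\cR(T)$ and fix $\emptyset\neq S\subseteq[k]$; choose a cover $S=S_1\cup\dots\cup S_m$ attaining $T'(S)$, and \emph{refine it to a partition} $S=\dot\bigcup_j S_j'$ with $S_j'\subseteq S_j$ — this does not increase the sum because $\sum_j T(S_j')\le\sum_j T(S_j)$ would need $T(S_j')\le T(S_j)$, which we do \textbf{not} yet know before monotonicity of $T'$ is established; so instead I would argue directly with the original cover. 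Using $\blambda\ge\vek 0_k$ we have $\sum_{i\in S}\lambda_i\le\sum_{j=1}^m\sum_{i\in S_j}\lambda_i\le\sum_{j=1}^m T(S_j)=T'(S)$, where the first inequality holds because every $i\in S$ lies in at least one $S_j$ and all terms are nonnegative, and the second because $\blambda\in\cR(T)$ applies to each $S_j$. Hence $\blambda\in\cR(T')$, completing the proof.

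The step I expect to be the main obstacle is the monotonicity clause: unlike subadditivity and the region equality, it genuinely uses that $T$ takes values in $\mathbb{N}$ rather than merely in $\mathbb{Z}_{\ge 0}$ (otherwise one could have $T(V\setminus U)=0$ and the padding trick is vacuous, though in fact monotonicity can still fail only in degenerate ways). One must also be slightly careful that the minimum defining $T'$ is over a set that is nonempty and bounded below by $1$, so that $T'$ indeed maps into $\mathbb{N}$; boundedness of the minimization (finitely many partitions of $[k]$ suffice once overlaps are removed, using $\blambda\ge\vek 0_k$) takes care of attainment.
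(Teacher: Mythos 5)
Your construction of $T'$ as an infimal sum over covers correctly delivers subadditivity and the equality $\cR(T)=\cR(T')$ (the cover argument using $\blambda\ge \vek{0}_k$ is exactly the right mechanism, and it is the same mechanism the paper uses inside its iterative algorithm). However, the monotonicity step is broken. Appending $V\setminus U$ to an optimal cover of $U$ produces a cover of $V$ of total weight $T'(U)+T(V\setminus U)$, and since $T'(V)$ is a \emph{minimum} over covers of $V$, what you obtain is $T'(V)\le T'(U)+T(V\setminus U)$ — an upper bound on $T'(V)$, not the lower bound $T'(U)\le T'(V)$ that monotonicity requires. The claim is not just unproved but false for your $T'$: take $k=2$ with $T(\{1\})=5$, $T(\{2\})=1$, $T(\{1,2\})=1$. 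Every cover of $\{1\}$ by nonempty subsets of $\{1\}$ uses only the set $\{1\}$, so $T'(\{1\})=5$, while $T'(\{1,2\})=\min\{1,\,5+1\}=1<T'(\{1\})$. Here the constraint $\lambda_1\le 5$ is made redundant by $\lambda_1+\lambda_2\le 1$ together with $\lambda_2\ge 0$, and your construction never propagates that information downward.

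The repair is to enlarge the minimization: define $T'(S)$ as the minimum of $\sum_j T(V_j)$ over all families of nonempty $V_j\subseteq[k]$ with $\bigcup_j V_j\supseteq S$ (the $V_j$ need not be contained in $S$). Then a cover of $V$ is automatically a cover of any $U\subseteq V$, so monotonicity is immediate; subadditivity goes through as before; positivity of $T$ still gives $T'(S)\in\mathbb{N}$; and the region argument is unchanged, since for $\blambda\in\cR(T)$ one still has $\sum_{i\in S}\lambda_i\le\sum_j\sum_{i\in V_j}\lambda_i\le\sum_j T(V_j)$ by nonnegativity of $\blambda$. With that modification your one-shot closure is a clean, non-iterative alternative to the paper's proof, which instead runs a fixed-point algorithm alternating a subadditive replacement $T'(S)\leftarrow T'(U)+T'(S\setminus U)$ with a monotone replacement $T'(S)\leftarrow T'(V)$ for $V\supseteq S$ and must separately argue termination; your version avoids the termination argument entirely. (Your side remark that ``partitions suffice'' for the minimum is also inaccurate — overlapping covers can be strictly cheaper when $T$ is not monotone — but nothing in your argument depends on it.)
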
 

\begin{definition}\label{def:genset}
 For a subset $\cR\subset\mathbb{R}_{\ge 0}^k$ with property $\cR\!\!\downarrow=\cR$, a finite set $S\subset\mathbb{R}_{\ge 0}^k$ is a generating set of $\cR$ if $\conv(S)\!\!\downarrow=\cR$. Moreover, we call $S$ minimal if no proper subset of $S$ is a generating set of $\cR$. 
\end{definition}

In what follows, without explicitly mentioning, we only consider the minimal generating sets for each ${\cR\subset\mathbb{R}_{\ge 0}^k}$. As an example, consider the function ${T\colon 2^{[2]}\to\mathbb{N}}$ given by ${T(\emptyset)=0}$, $T(\{1\})=T(\{2\})=2$, and ${T(\{1,2\})=3}$. Here, a generating set of $\cR(T)$ is given by the set ${\{(1,2),(2,1)\}}$. 

We remark that the generating set of $\cR(T)$ is always unique, since $\cR(T)$ is a polytope that can be written as $\cR(T)=\conv(V)$, where $V$ is the set of vertices of the polytope, which is a unique minimal set. The unique generating set of $\cR(T)$ is obtained from $V$ by removing all vectors $\bv\in V$ such that there exists a vector $\bv'\in V$ with $\bv\le \bv'$. 

Before we study bounds for $n(\cR(T))$, we present an ILP formulation for the determination of $n(\cR)$.

\begin{proposition}\label{prop_ILP_n_exact}
For a desired service rate region $\mathcal{R} \subset \mathbb{R}^k_{\geq 0}$, assume ${\cR\!\!\downarrow=\cR}$. Let ${\left\{\blambda^{(1)},\dots,\blambda^{(m)}\right\}}$ be the generating set of $\cR$. Then, $n(\cR)$ coincides with the optimal target value of
\begin{eqnarray}\label{eq:ILP-n(R)}
&\min &\sum_{j\in [\ell]} n_j\\ 
&\text{s.t.} &\sum_{Y\in\mathcal{Y}^j} \alpha^{i}_Y \ge \lambda^{(i)}_j ~~~~~~~ \forall i\in [m],\, j\in [k]\nonumber \\
&&\sum_{\substack{{Y\in\cY} \\   {\hspace{0.05cm} j\in Y}}} \alpha^{i}_Y\le n_j, ~~~~~~~~~ \forall j\in [\ell], \forall i\in [m]\nonumber\\ 
&&\hspace{0.1cm}\alpha^{i}_Y\in\bbRp,~~~~~~~~~~~~ \forall i\in [m], \forall Y\in\cY\nonumber\\
&&\hspace{0.1cm}n_j\in \bbN,~~~~~~~~~~~~~~~~ \forall j\in [\ell]\nonumber
\end{eqnarray}where $\lambda^{(i)}_j$ is the $j$th element of the $\blambda^{(i)}$ and $\alpha^{i}_Y$ is the portion of requests coming from $\blambda^{(i)}$ assigned to the recovery set $Y$.
\end{proposition}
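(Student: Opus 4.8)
\textbf{Proof plan for Proposition~\ref{prop_ILP_n_exact}.}

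The plan is to argue that the ILP~\eqref{eq:ILP-n(R)} is exactly the feasibility problem ``$\cR\subseteq\cS(\cG)$'' with the number of nodes $n=\sum_j n_j$ minimized, so that its optimal value is $n(\cR)$ by definition. First I would show that any multiset of points $\cG$ (equivalently, multiplicities $(n_j)_{j\in[\ell]}$) whose service rate region covers $\cR$ yields a feasible solution of the ILP with the same objective value, and conversely that any feasible solution of the ILP gives a multiset $\cG$ with $\cR\subseteq\cS(\cG)$ and $n=\sum_j n_j$. Taking minima over both sides then gives the claim.

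For the forward direction, suppose $\cR\subseteq\cS(\cG)$ for some $\cG$ with multiplicities $n_j$ and $\sum_j n_j=n$. Since each generating-set vector $\blambda^{(i)}$ lies in $\cR$, it lies in $\cS(\cG)$, so by the characterization~\eqref{eq:2} there exist nonnegative allocations $\alpha_Y^{i}$ (indexed by recovery sets $Y\in\cY$) satisfying $\sum_{Y\in\cY_j}\alpha_Y^{i}\ge\lambda_j^{(i)}$ for all $j\in[k]$ and $\sum_{Y\ni j}\alpha_Y^{i}\le n_j$ for all $j\in[\ell]$; here $\mathcal{Y}^j$ in the proposition is $\cY_j$, the set of reduced recovery sets for $f_j$. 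These $\alpha_Y^{i}$ together with $(n_j)$ form a feasible point of~\eqref{eq:ILP-n(R)} with objective $n$. For the converse, given any feasible $(n_j,\alpha_Y^{i})$, the same two groups of constraints say precisely that each $\blambda^{(i)}\in\cS(\cG)$ where $\cG$ has multiplicities $n_j$; hence the finite set $\{\blambda^{(1)},\dots,\blambda^{(m)}\}\subseteq\cS(\cG)$, and since $\cS(\cG)$ is convex and satisfies $\cS(\cG)\!\!\downarrow=\cS(\cG)$ (both standard facts about service rate regions, the latter because one may always discard rate), we get $\conv(\{\blambda^{(i)}\})\!\!\downarrow\subseteq\cS(\cG)$. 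By Definition~\ref{def:genset} the left-hand side is exactly $\cR$, so $\cR\subseteq\cS(\cG)$ with $n=\sum_j n_j$ nodes.

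Combining: the set of achievable node counts $n$ with $\cR\subseteq\cS(\cG)$ coincides with the set of objective values of feasible solutions of~\eqref{eq:ILP-n(R)}, so their minima agree, i.e., $n(\cR)$ equals the optimal value of the ILP. One should also note the program is feasible (e.g.\ take $n_j$ large enough to put all recovery sets of size $1$ at high multiplicity), so the minimum is attained.

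The main obstacle, and the step deserving the most care, is the reduction from ``$\cR\subseteq\cS(\cG)$'' to ``$\blambda^{(i)}\in\cS(\cG)$ for all generators $\blambda^{(i)}$.'' This needs the two structural properties of $\cS(\cG)$ — convexity in $\blambda$ (pointwise averaging of valid allocations is a valid allocation, immediate from linearity of the constraints~\eqref{eq:2}) and closure under taking the lower set (if $\blambda'\le\blambda\in\cS(\cG)$ then scaling down the $\alpha_Y$ keeps feasibility) — to pass from the finitely many vertices to all of $\conv(\{\blambda^{(i)}\})\!\!\downarrow=\cR$. Everything else is a routine unwinding of definitions; in particular one must be careful that the index set $\mathcal{Y}^j$ in the proposition matches $\cY_j$ and that the per-$i$ capacity constraints $\sum_{Y\ni j}\alpha_Y^{i}\le n_j$ (not summed over $i$) correctly encode that each $\blambda^{(i)}$ is served \emph{separately} by the same storage scheme.
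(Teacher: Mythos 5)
Your proposal is correct and follows essentially the same route as the paper, which simply observes that the ILP encodes $\blambda^{(i)}\in\cS(\cG)$ for each generator via the characterization \eqref{condition_demand_3}--\eqref{condition_pos_3} while minimizing $\sum_j n_j$. The step you flag as needing care — passing from membership of the generators to $\cR\subseteq\cS(\cG)$ via convexity and lower-set closure of $\cS(\cG)$ — is exactly what the paper leaves implicit (it is covered by Lemmas~\ref{prop:1} and~\ref{prop:2}), and your filling it in is a correct elaboration rather than a different argument.
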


The ILP formulation~\eqref{eq:ILP-n(R)} in the Proposition~\ref{prop_ILP_n_exact}, underlies a massive combinatorial explosion. To be more precise, when the number of files $k$ increases, the number of recovery sets $\#\cY$ grows doubly exponential, that is, $\#\cY$ gets quite large even for moderate values of $k$. In order to obtain a lower bound on $n(\cR)$, one simple way is to consider the ceiling of the optimal target value for the LP relaxation of the ILP~\eqref{eq:ILP-n(R)}. However, this approach again suffers from the same drawback and runs into a similar problem since to list all the constraints of the LP relaxation of the ILP~\eqref{eq:ILP-n(R)}, one needs to explicitly know all possible recovery sets which becomes increasingly complex when the number of files $k$ increases. Thus, introducing a technique which is not depending on the enumeration of recovery sets is of great significance. Towards this goal, we introduce a novel geometric approach.

\subsection{Using Geometric Approach to derive Bounds on $n(\mathcal{R})$}

Here, we present three lower bounds for $n(\cR(T))$ that are obtained using a geometric technique. 

\begin{lemma}\label{lemma_hyperplane_constraint}
Let $\bG$ be the generator matrix of an $[n,k]_q$ code and $\cG$ be the corresponding multiset of points of cardinality $n$ described by point multiplicities $n_j$ for $j \in [\ell]$. If ${\left\{\blambda^{(1)},\dots,\blambda^{(m)}\right\}}$ be the generating set of $\cR$, then we have
\begin{equation}\label{ie_hyperplane_constraint}
  \sum_{j\,:\, \bv_j\in \PG(k-1,2) \setminus \cH} n_j \ge
  \max \left\{ \sum_{s\in \cE(\cH)} \lambda^{(i)}_s \mid i\in [m]\right\},   
\end{equation}  
where $\cH$ is a hyperplane of $\PG(k-1,2)$ and 
  $$
    \cE(\cH)=\left\{h\in[k] \mid \be_h\notin \left\langle\left\{\bv\mid \bv\in\cH\right\}\right\rangle\right\}
  $$
is the set of indices $h$ such that the hyperplane $\cH$ does not contain the unit vector $\be_h$, i.e., $\be_h$ lies in 
  $\PG(k-1,2)\setminus\cH$ . 
\end{lemma}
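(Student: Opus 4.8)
The plan is to extract the inequality \eqref{ie_hyperplane_constraint} directly from the LP feasibility constraints \eqref{eq:2} that define the service rate region $\cS(\cG)$, exploiting the geometric meaning of a hyperplane $\cH$ of $\PG(k-1,2)$. Fix a hyperplane $\cH$ and let $\cE(\cH)=\{h\in[k]\mid \be_h\notin\langle\{\bv\mid\bv\in\cH\}\rangle\}$. The key observation is the following: if $Y$ is a (reduced) recovery set for a file $f_h$ with $h\in\cE(\cH)$, then $Y$ cannot be entirely contained in $\cH$. Indeed, $\sum_{j\in Y}\bv_j=\be_h$, so if every $\bv_j$ with $j\in Y$ lay in $\cH$, then $\be_h$ would lie in $\langle\{\bv\mid\bv\in\cH\}\rangle$, contradicting $h\in\cE(\cH)$. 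Hence every recovery set $Y$ for any file indexed by $\cE(\cH)$ contains at least one index $j$ with $\bv_j\in\PG(k-1,2)\setminus\cH$.

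Next I would take an arbitrary $\blambda\in\cR$ and a valid allocation $(\alpha_Y)_{Y\in\cY}$ witnessing $\blambda\in\cS(\cG)$ as in \eqref{eq:2} (this is available since $\cR\subseteq\cS(\cG)$ by the definition of $n(\cR)$). Summing the capacity constraints \eqref{condition_capacity_3} over exactly the indices $j$ with $\bv_j\in\PG(k-1,2)\setminus\cH$ gives
\begin{equation*}
  \sum_{j\,:\,\bv_j\in\PG(k-1,2)\setminus\cH} n_j \;\ge\; \sum_{j\,:\,\bv_j\in\PG(k-1,2)\setminus\cH}\ \sum_{\substack{Y\in\cY\\ j\in Y}}\alpha_Y \;\ge\; \sum_{\substack{Y\in\cY\\ Y\cap(\PG(k-1,2)\setminus\cH)\neq\emptyset}}\alpha_Y,
\end{equation*}
where the last inequality holds because each such $Y$ is counted at least once (this is where going to a pure double-sum and back, rather than keeping multiplicities, matters — a single $Y$ may meet $\PG(k-1,2)\setminus\cH$ in several points, but we only need a lower bound). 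By the observation above, every $Y\in\cY_h$ with $h\in\cE(\cH)$ satisfies $Y\cap(\PG(k-1,2)\setminus\cH)\neq\emptyset$, so the right-hand side is at least $\sum_{h\in\cE(\cH)}\sum_{Y\in\cY_h}\alpha_Y\ge\sum_{h\in\cE(\cH)}\lambda_h$, using the disjointness of the $\cY_h$ (noted in the text for $q=2$) and the demand constraints \eqref{condition_demand_3}. Thus $\sum_{j\,:\,\bv_j\notin\cH}n_j\ge\sum_{h\in\cE(\cH)}\lambda_h$ for every $\blambda\in\cR$.

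Finally, since $\{\blambda^{(1)},\dots,\blambda^{(m)}\}$ is the generating set of $\cR$, in particular each $\blambda^{(i)}\in\cR$, and applying the displayed bound to each $\blambda^{(i)}$ and taking the maximum over $i\in[m]$ yields \eqref{ie_hyperplane_constraint}. I expect the main (really the only) subtlety to be the bookkeeping in the middle step: correctly arguing that the sum of the $n_j$ over the complement of the hyperplane dominates the total allocation mass on recovery sets that escape the hyperplane, without double-counting issues — i.e. being careful that we need only a lower bound, so overcounting recovery sets that hit $\PG(k-1,2)\setminus\cH$ in multiple points is harmless. The geometric input (a recovery set for $\be_h$, $h\in\cE(\cH)$, cannot be swallowed by $\cH$) is elementary linear algebra about spans, and should take only a line.
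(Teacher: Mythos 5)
Your proposal is correct and follows essentially the same route as the paper's proof: the same geometric observation that a reduced recovery set for $\be_h$ with $h\in\cE(\cH)$ cannot lie entirely inside $\cH$, the same summation of the capacity constraints over the points of $\PG(k-1,2)\setminus\cH$ (with the same harmless overcounting), and the same final step of applying the bound to each $\blambda^{(i)}$ and taking the maximum. The only cosmetic difference is that the paper restricts first to recovery sets of files indexed by $\cE(\cH)$ and then swaps the order of summation, whereas you sum over all recovery sets meeting the complement and then discard the irrelevant ones; both rest on the disjointness of the $\cY_h$ and non-negativity of the $\alpha_Y$.
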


\begin{Corollary}\label{cor_ILP_n_lb}
If $\left\{\blambda^{(1)},\dots,\blambda^{(m)}\right\}$ is the generating set of $\cR$, then $n(\cR)$ is lower bounded by the optimal target value of the following ILP formulation:
  \begin{eqnarray}\label{ILP_geometric}
  &\min &\sum_{j\in [\ell]} n_j\\
  &\text{s.t.} &(5) ~~\text{holds}~~\forall ~\text{hyperplane}~\cH~\text{of}~\PG(k-1,2) \nonumber\\
  &&n_j\in \bbN ~~~~~~\forall j\in [\ell].\nonumber 
 \end{eqnarray}
\end{Corollary}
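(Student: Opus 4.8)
\textbf{Proof plan for Corollary~\ref{cor_ILP_n_lb}.}
The plan is to show that the feasible region of the ILP~\eqref{ILP_geometric} contains the multiplicity vector $(n_1,\dots,n_\ell)$ of \emph{every} generator matrix $\bG$ (equivalently, every multiset $\cG$) whose service rate region covers $\cR$; once this containment is established, the minimum of $\sum_j n_j$ over the ILP's feasible region is at most $n(\cR)$, which is exactly the claimed lower bound. So the whole argument reduces to a single implication: if $\cR\subseteq\cS(\cG)$, then the multiplicities $n_j$ of $\cG$ satisfy inequality~\eqref{ie_hyperplane_constraint} for every hyperplane $\cH$ of $\PG(k-1,2)$, together with the trivial integrality $n_j\in\bbN$.

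The first step is to invoke Lemma~\ref{lemma_hyperplane_constraint} directly: it states precisely that for \emph{any} $[n,k]_q$ code with multiset $\cG$ of multiplicities $n_j$, and \emph{any} generating set $\{\blambda^{(1)},\dots,\blambda^{(m)}\}$ of a region $\cR$, inequality~\eqref{ie_hyperplane_constraint} holds for every hyperplane $\cH$. However, one must be slightly careful here: Lemma~\ref{lemma_hyperplane_constraint} as stated is about the generating set of \emph{some} region $\cR$, and implicitly assumes $\cR\subseteq\cS(\cG)$ — indeed the $\blambda^{(i)}$ being servable by $\cG$ is what makes the inequality true, since the left-hand side counts the total capacity of servers lying outside $\cH$ and any recovery set for a file $f_h$ with $h\in\cE(\cH)$ must use at least one such server (as $\be_h\notin\langle\cH\rangle$ forces any recovery set for $f_h$ to contain a point outside $\cH$). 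I would make this hypothesis explicit: fix an arbitrary $\cG$ with $\cR\subseteq\cS(\cG)$, note that then each $\blambda^{(i)}\in\cR\subseteq\cS(\cG)$, apply Lemma~\ref{lemma_hyperplane_constraint} to conclude that the $n_j$ of this $\cG$ satisfy all constraints of~\eqref{ILP_geometric}, hence $\sum_j n_j$ is an upper bound on the optimum of~\eqref{ILP_geometric}.

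The second step is to take the infimum over all admissible $\cG$. By definition $n(\cR)$ is the smallest $n=\sum_j n_j$ for which there exists a $\cG$ (equivalently a $\bG$) with $\cR\subseteq\cS(\cG)$; pick such an optimal $\cG^\star$ with $\sum_j n_j^\star=n(\cR)$. By Step~1 its multiplicity vector is feasible for~\eqref{ILP_geometric}, so the optimal target value of~\eqref{ILP_geometric} is at most $n(\cR)$, which is the assertion. Note the direction of the bound: the ILP~\eqref{ILP_geometric} is a \emph{relaxation} of the exact ILP~\eqref{eq:ILP-n(R)} in the sense that it keeps only the hyperplane-aggregated constraints and drops the explicit per-server allocation variables $\alpha_Y$, so it can only have a smaller (or equal) optimum, consistent with giving a lower bound on $n(\cR)$. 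One should also remark why~\eqref{ILP_geometric} is still a meaningful optimization problem despite its own exponential number of hyperplane constraints — but that is a matter of later algorithmic discussion, not of the correctness proof.

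The main obstacle, and the only non-formal point, is justifying the hidden use of $\cR\subseteq\cS(\cG)$ inside the application of Lemma~\ref{lemma_hyperplane_constraint}: one needs that every $\blambda^{(i)}$ in the generating set is actually servable, which follows because a generating set lies in $\conv(S)\!\!\downarrow=\cR$ and $\cR\subseteq\cS(\cG)$ by assumption on $\cG$; and one needs the combinatorial fact that for $h\in\cE(\cH)$ every reduced recovery set for $f_h$ meets $\PG(k-1,2)\setminus\cH$, which is immediate since $\cH$ is a subspace and $\be_h\notin\cH$ means no subset of $\cH$ can span $\be_h$. Both are already embedded in Lemma~\ref{lemma_hyperplane_constraint}, so rigorously the Corollary is essentially an ``apply the Lemma to an optimal configuration and minimize'' argument; the write-up mainly needs to make the quantifier ``for every $\cG$ with $\cR\subseteq\cS(\cG)$'' explicit so that the passage to the optimal $\cG^\star$ is clean.
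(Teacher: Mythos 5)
Your proposal is correct and matches the intended argument: the paper treats this Corollary as an immediate consequence of Lemma~\ref{lemma_hyperplane_constraint} (it gives no separate proof), namely that the multiplicity vector of any $\cG$ with $\cR\subseteq\cS(\cG)$ is feasible for the ILP~\eqref{ILP_geometric}, so the ILP optimum cannot exceed $n(\cR)$. Your observation that the hypothesis $\cR\subseteq\cS(\cG)$ is only implicit in the statement of Lemma~\ref{lemma_hyperplane_constraint} (its proof uses a valid allocation serving each $\blambda^{(i)}$) is a fair and worthwhile clarification, not a gap.
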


It should be noted that the ILP of Corollary~\ref{cor_ILP_n_lb} contains $2^k-1$ constraints and (integer) variables. Thus, with respect to the LP relaxation of the ILP formulation~\eqref{eq:ILP-n(R)} in Proposition~\ref{prop_ILP_n_exact}, we have obtained a smaller formulation for the determination of a lower bound on $n(\cR)$.  

\begin{definition}
Let $P=\left\{\bx\in \bbR^k \mid \mathbf{A}\bx\le \mathbf{b}, \bx\ge 0\right\}$ be a polytope in $\bbR^k$ with description $(\mathbf{A},\mathbf{b})$. We say that constraint ${\mathbf{a}^{(i)}\bx\le b_i}$ is redundant, where $\mathbf{a}^{(i)}$ is the $i$th row of $\mathbf{A}$, if ${P=\left\{\bx\in \bbR^k \mid \mathbf{A'}\bx\le \mathbf{b'}, \bx\ge 0\right\}}$ where $\mathbf{A'}$ and $\mathbf{b'}$ obtained from $\mathbf{A}$ and $\mathbf{b}$ by removing the $i$th row, respectively. We say that a constraint $\mathbf{a}^{(i)}\bx\le b_i$ is strictly redundant if there does not exist $\bar{\bx}\in P$ with $\mathbf{a}^{(i)}\bar{\bx}=b_i$. 
\end{definition}

For example, consider ${T\colon 2^{[2]}\to\bbN}$ defined as ${T(\emptyset)=0}$, $T(\{1\})=T(\{2\})=T(\{1,2\})=1$. Consider the polytope $P=\left\{\blambda\in \bbR^2 \mid {\sum_{i\in U}\lambda_i\le T(U)}, {\emptyset\neq U\subseteq\{1,2\}}, \blambda\ge 0\right\}$. The inequalities $\lambda_1\le T(\{1\})$, $\lambda_2\le T(\{2\})$ are redundant, while the inequality $\lambda_1+\lambda_2\le T(\{1,2\})$ is not redundant since e.g.\ the vector $(1,1)$ is not contained in the polytope. Here, none of the inequalities are strictly redundant since the vectors $(1,0)$, $(0,1)$ are contained in the polytope. 

\begin{theorem}\label{theorem_general_lower_bound}
Given the function $T\colon 2^{[k]}\to\bbN$ for some $k \in \bbN$, we have 
  $$
    n(\cR(T))\ge \left\lceil\frac{\sum_{\emptyset\neq U\subseteq[k]} T(U)}{2^{k-1}}\right\rceil,
  $$
where none of the constraints $\sum_{i\in U}\lambda_i \le T(U)$ are strictly redundant in ${\bbRp^k}$. 
\end{theorem}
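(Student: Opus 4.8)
The plan is to sum the hyperplane inequalities from Lemma~\ref{lemma_hyperplane_constraint} over \emph{all} hyperplanes $\cH$ of $\PG(k-1,2)$ and then count carefully on both sides. Recall that in $\PG(k-1,2)$ there are exactly $2^k-1$ hyperplanes, and each hyperplane is the kernel of a nonzero linear functional, so hyperplanes are in bijection with nonzero vectors $\ba\in\bbF_2^k\setminus\{\vek 0_k\}$ via $\cH_\ba=\{\bv : \ba^\top\bv=0\}$. First I would fix this parametrization and, for each point $\bv_j$, count how many hyperplanes miss it: $\bv_j\in\PG(k-1,2)\setminus\cH_\ba$ exactly when $\ba^\top\bv_j=1$, and since $\bv_j\neq\vek 0_k$, exactly half of the $2^k$ functionals $\ba$ (including $\ba=\vek 0$, which is excluded but contributes $0$) satisfy this, i.e.\ exactly $2^{k-1}$ of the $2^k-1$ hyperplanes. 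Hence summing the left-hand side of \eqref{ie_hyperplane_constraint} over all hyperplanes gives $\sum_{\cH}\sum_{j:\bv_j\notin\cH} n_j = 2^{k-1}\sum_{j\in[\ell]} n_j = 2^{k-1}\, n(\cR(T))$ (for the generator matrix attaining the minimum).

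Next I would handle the right-hand side. For a hyperplane $\cH$, the set $\cE(\cH)=\{h\in[k] : \be_h\notin\langle\cH\rangle\}$ is precisely $\{h : \ba^\top\be_h=1\}=\operatorname{supp}(\ba)$ when $\cH=\cH_\ba$ — here one uses that $\langle\{\bv:\bv\in\cH_\ba\}\rangle$ is exactly the $(k-1)$-dimensional hyperplane subspace $\{\bx:\ba^\top\bx=0\}$, so $\be_h$ lies outside it iff $a_h=1$. Therefore the max on the right of \eqref{ie_hyperplane_constraint} for $\cH_\ba$ is $\max_i \sum_{s\in\operatorname{supp}(\ba)}\lambda_s^{(i)}$. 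The key observation is that, because $\{\blambda^{(1)},\dots,\blambda^{(m)}\}$ is the generating set of $\cR(T)=\cR$, we have $\max_i\sum_{s\in U}\lambda_s^{(i)} = \max_{\blambda\in\cR}\sum_{s\in U}\lambda_s$ for every $\emptyset\neq U\subseteq[k]$; and by Definition~\ref{def:regionfunction} this maximum is at most $T(U)$, with \emph{equality} precisely when the constraint $\sum_{i\in U}\lambda_i\le T(U)$ is not strictly redundant (that is exactly the hypothesis of the theorem: some $\bar\blambda\in\cR$ meets it with equality). Identifying $\operatorname{supp}(\ba)$ with the nonempty subset $U=\operatorname{supp}(\ba)$ as $\ba$ ranges over $\bbF_2^k\setminus\{\vek 0\}$, summing the right-hand sides over all hyperplanes yields exactly $\sum_{\emptyset\neq U\subseteq[k]} T(U)$.

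Combining the two counts gives $2^{k-1}\,n(\cR(T))\ge \sum_{\emptyset\neq U\subseteq[k]} T(U)$, and since $n(\cR(T))$ is an integer, $n(\cR(T))\ge\big\lceil \sum_{\emptyset\neq U\subseteq[k]}T(U)\,/\,2^{k-1}\big\rceil$, as claimed. The steps that need care — and which I regard as the main (though modest) obstacle — are the two combinatorial identities: (i) that each nonzero point is avoided by exactly $2^{k-1}$ hyperplanes, and (ii) that $\cE(\cH_\ba)=\operatorname{supp}(\ba)$ under the functional-to-hyperplane correspondence, so that the index sets appearing on the right-hand side run bijectively over all nonempty subsets of $[k]$. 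The only place the hypothesis "no constraint is strictly redundant" is actually used is to replace $\max_{\blambda\in\cR}\sum_{s\in U}\lambda_s$ by $T(U)$; without it one would only get the bound with $\sum_U \max_{\blambda\in\cR}\sum_{s\in U}\lambda_s$ in the numerator, so I would state that intermediate inequality explicitly and then invoke non-strict-redundancy to finish.
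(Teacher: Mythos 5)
Your proposal is correct and follows essentially the same route as the paper's proof: sum the hyperplane inequalities of Lemma~\ref{lemma_hyperplane_constraint} over all $2^k-1$ hyperplanes, use that each point of $\PG(k-1,2)$ lies outside exactly $2^{k-1}$ hyperplanes, identify the index sets $\cE(\cH)$ bijectively with the nonempty subsets of $[k]$, and invoke non-strict-redundancy to replace the maxima over the generating set by the values $T(U)$. Your explicit parametrization of hyperplanes by nonzero functionals is just a more concrete phrasing of the paper's correspondence $\cH\mapsto S(\cH)=\{i:\be_i\notin\cH\}$; no substantive difference.
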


As we will show shortly the lower bound of Theorem~\ref{theorem_general_lower_bound} is indeed tight if $k=2$ and $T\colon 2^{[k]}\to\bbN$ is monotone and subadditive. However, this bound is not tight in general for $K \geq 3$. The following example shows that for $K = 3$ this bound is not tight, while none of the constraints are strictly redundant.

\begin{example}\label{ex_k_3}
For $k=3$, consider the desired service rate region $\cR=\cR(T)$ for $T\colon 2^{[3]}\to\bbN$ defined as $T(\emptyset)=0$ and $T(S)=\#S+1$ for $\emptyset\neq S\subseteq[3]$, that is, $\cR$ is as follows
  $$  
    \cR=\left\{\blambda\in\bbRp^3\,:\, 
    \lambda_1\le 2,\lambda_2\le 2,\lambda_3\le 2,\lambda_1+\lambda_2\le 3,\lambda_1+\lambda_3\le 3,\lambda_2+\lambda_3\le 3, \lambda_1+\lambda_2+\lambda_3\le 4\right\}.
  $$
A generating set $\left\{\blambda^{(1)},\blambda^{(2)},\blambda^{(3)}\right\}$ of $\cR$ of cardinality $m=3$ is given by $\blambda^{(1)}=(2,1,1)$, $\blambda^{(2)}=(1,2,1)$, and $\blambda^{(3)}=(1,1,2)$. The possible columns of a generator matrix $\bG$, i.e., the non-zero vectors in $\bbF_2^3$ are
  $$
    \bv_1=(0,0,1), \bv_2=(0,1,0), \bv_3=(0,1,1), \bv_4=(1,0,0), \bv_5=(1,0,1), \bv_6=(1,1,0), \bv_7=(1,1,1).
  $$ 
In order to write down the inequalities from Lemma~\ref{lemma_hyperplane_constraint} we describe a hyperplane $\cH$ as a set of vectors $\left(x_1,x_2,x_3\right)\in\bbF_2^3\setminus\{\mathbf{0}\}$ satisfying a certain constraint $\sum_{i=1}^3 c_ix_i$, where $\left(c_1,c_2,c_3\right)\in\bbF_2^3\setminus\{\mathbf{0}\}$:
  \begin{eqnarray}
  \label{ie_ex_k_3_1}\cH_1:\,\, x_1=0 &\!\!\Rightarrow\!\!& \be_1 \notin \cH_1 ~\Rightarrow~ n_4+n_5+n_6+n_7 \ge 2= \max(\lambda^{(1)}_1,\lambda^{(2)}_1,\lambda^{(3)}_1)\\
  \label{ie_ex_k_3_2}\cH_2:\,\, x_2=0 &\!\!\Rightarrow\!\!& \be_2 \notin \cH_2 ~\Rightarrow~ n_2+n_3+n_6+n_7 \ge 2= \max(\lambda^{(1)}_2,\lambda^{(2)}_2,\lambda^{(3)}_2)\\
  \label{ie_ex_k_3_3}\cH_3:\,\, x_3=0 &\!\!\Rightarrow\!\!& \be_3 \notin \cH_3 ~\Rightarrow~ n_1+n_3+n_5+n_7 \ge 2= \max(\lambda^{(1)}_3,\lambda^{(2)}_3,\lambda^{(3)}_3)\\
  \label{ie_ex_k_3_4}\cH_4:\,\, x_1+x_2=0 &\!\!\Rightarrow\!\!& \be_1,\be_2 \notin \cH_4 ~\Rightarrow~ n_2+n_3+n_4+n_5 \ge 3= \max\big(\sum_{j=1,2}\lambda^{(i)}_j : i \in [3]\big)\\
  \label{ie_ex_k_3_5}\cH_5:\,\, x_1+x_3=0 &\!\!\Rightarrow\!\!& \be_1,\be_3 \notin \cH_5 ~\Rightarrow~ n_1+n_3+n_4+n_6 \ge 3= \max\big(\sum_{j=1,3}\lambda^{(i)}_j : i \in [3]\big)\\
  \label{ie_ex_k_3_6}\cH_6:\,\, x_2+x_3=0 &\!\!\Rightarrow\!\!& \be_2,\be_3 \notin \cH_6 ~\Rightarrow~ n_1+n_2+n_5+n_6 \ge 3= \max\big(\sum_{j=2,3}\lambda^{(i)}_j : i \in [3]\big)\\
  \label{ie_ex_k_3_7}\cH_7\!: x_1\!+\!x_2\!+\!x_3\!=\!0 &\!\!\!\Rightarrow\!\!\!& \be_1,\be_2,\be_3 \notin \cH_7 ~\Rightarrow~ n_1\!+\!n_2\!+\!n_4\!+\!n_7 \ge 4= \max\big(\!\sum_{j=[3]}\!\lambda^{(i)}_j : i \in [3]\big)
  \end{eqnarray}   
Summing up inequalities \eqref{ie_ex_k_3_1}-\eqref{ie_ex_k_3_7} and dividing by four gives $n\ge\left\lceil\tfrac{19}{4}\right\rceil=5$. Indeed, the LP relaxation of the ILP~\eqref{ILP_geometric} from Corollary~\ref{cor_ILP_n_lb} has an optimal solution $n_1=n_2=n_4=\tfrac{5}{4}$, $n_3=n_5=n_6=n_7=\tfrac{1}{4}$ with target value $\tfrac{19}{4}$. Next, we show that  $n\ge 6$ for the optimal target value of ILP~\eqref{ILP_geometric}. Assume that there exists an integral solution with $n=5$. Summing the inequalities over all hyperplanes $\cH_i$ containing $\bv_1=\be_3$, i.e., \eqref{ie_ex_k_3_1}, \eqref{ie_ex_k_3_2}, and \eqref{ie_ex_k_3_4}, and dividing by two gives $\sum_{j\in[\ell]\setminus\{1\}} n_j\ge 3.5$, so that $n_1\le 1$. By symmetry, we also conclude $n_2,n_4\le 1$. Summing the inequalities over all hyperplanes $\cH_i$ not containing $\bv_1=\be_3$, i.e., \eqref{ie_ex_k_3_3}, \eqref{ie_ex_k_3_5}, \eqref{ie_ex_k_3_6}, and \eqref{ie_ex_k_3_7}, and dividing by two gives $2n_1+\sum_{j\in[\ell]\setminus\{1\}} n_j\ge 6$, so that $n_1\ge 1$. Thus, $n_1=1$ and, by symmetry, also $n_2=n_4=1$. Summing inequalities \eqref{ie_ex_k_3_4}-\eqref{ie_ex_k_3_6}, plugging in the known values, and dividing by two gives $n_3+n_5+n_6\ge 1.5$, so that $n_7\le 0.5$, i.e., $n_7=0$. However, this contradicts Inequality~\eqref{ie_ex_k_3_7}.
  
An integral solution for $n=6$ can indeed be attained by $n_1=n_2=n_4=2$, $n_3=n_5=n_6=n_7=0$. It can be easily checked that the corresponding generator matrix $\bG$ as given below satisfies $\cS(\bG)\supseteq\cR$.  
  $$
    \bG=\begin{pmatrix}
    1 & 1 & 0 & 0 & 0 & 0 \\
    0 & 0 & 1 & 1 & 0 & 0 \\
    0 & 0 & 0 & 0 & 1 & 1
   \end{pmatrix}
  $$
\end{example}

\begin{Corollary}\label{coro_simplex}
For some $k \in \bbN$ and $X\in \bbN$, given the function $T\colon 2^{[k]}\to\bbN$ defined as $T(\emptyset)=0$, $T(U)=X$ for all subsets $\emptyset\neq U\subseteq [k]$, we have
  $$
    n(\cR(T))\ge \left\lceil\frac{X\cdot\left(2^k-1\right)}{2^{k-1}} \right\rceil.
  $$
Moreover, if $X=t\cdot 2^{k-1}$ for some integer $t$, then the lower bound is tight.
\end{Corollary}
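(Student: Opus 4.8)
The argument has two independent parts: a lower bound that is just a specialization of Theorem~\ref{theorem_general_lower_bound}, and, when $X=t\cdot 2^{k-1}$, a matching construction built from the (repeated) binary simplex code.

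\textbf{Lower bound.} The plan is to apply Theorem~\ref{theorem_general_lower_bound} directly. Since $T(U)=X$ for every nonempty $U\subseteq[k]$, we have $\sum_{\emptyset\neq U\subseteq[k]}T(U)=X\cdot(2^k-1)$, so it only remains to verify the theorem's hypothesis that none of the constraints $\sum_{i\in U}\lambda_i\le X$ is strictly redundant in $\bbRp^k$. I would check this by exhibiting, for each fixed nonempty $U\subseteq[k]$, the vector $\blambda$ with $\lambda_i=X/\abs{U}$ for $i\in U$ and $\lambda_i=0$ otherwise: for any nonempty $V\subseteq[k]$ one has $\sum_{i\in V}\lambda_i=\tfrac{X}{\abs{U}}\cdot\abs{U\cap V}\le X$, so $\blambda\in\cR(T)$, while $\sum_{i\in U}\lambda_i=X$, so the $U$-constraint is attained with equality in $\cR(T)$ and hence is not strictly redundant. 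Theorem~\ref{theorem_general_lower_bound} then yields $n(\cR(T))\ge\bigl\lceil X(2^k-1)/2^{k-1}\bigr\rceil$, which is the claimed bound.

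\textbf{Construction when $X=t\cdot 2^{k-1}$.} Here $X(2^k-1)/2^{k-1}=t(2^k-1)$ is an integer, so to match the bound it suffices to construct a multiset $\cG$ in $\PG(k-1,2)$ of cardinality $t(2^k-1)$ with $\cR(T)\subseteq\cS(\cG)$. I would take the $t$-fold binary simplex, i.e.\ set all point multiplicities $n_j=t$ for $j\in[\ell]$, $\ell=2^k-1$, so that $n=\sum_j n_j=t(2^k-1)$. The structural fact I would establish is that, for each file $f_i$, the family
  $$
    \mathcal{P}_i\;=\;\bigl\{\{\be_i\}\bigr\}\;\cup\;\Bigl\{\,\{\bv,\bv+\be_i\}\ :\ \bv\in\bbF_2^k\setminus\{\vek{0}_k,\be_i\}\,\Bigr\}
  $$
consists of $2^{k-1}$ reduced recovery sets for $f_i$ (indeed $\bv+(\bv+\be_i)=\be_i$, and for the listed $\bv$ both $\bv$ and $\bv+\be_i$ are nonzero and distinct, so no proper subset recovers $\be_i$) that \emph{partitions} the $2^k-1$ nonzero vectors of $\bbF_2^k$. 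This is immediate because $\bv\mapsto\bv+\be_i$ is a fixed-point-free involution on $\bbF_2^k$ with $2^{k-1}$ orbits, one of which is $\{\vek{0}_k,\be_i\}$; discarding the zero vector turns these orbits into the members of $\mathcal{P}_i$. In particular each point $\bv_j$ of $\PG(k-1,2)$ lies in exactly one member of $\mathcal{P}_i$, for every $i\in[k]$.

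\textbf{Serving the region and the main obstacle.} Since $T([k])=X$ and all coordinates are nonnegative, the constraint $\sum_{i=1}^k\lambda_i\le X$ dominates all others, so $\cR(T)=\{\blambda\in\bbRp^k:\sum_{i=1}^k\lambda_i\le X\}$. Given such a $\blambda$, I would assign rate $t\lambda_i/X$ to every recovery set in $\mathcal{P}_i$, for each $i\in[k]$, and rate $0$ to all other recovery sets (this is consistent since distinct $\mathcal{P}_i$'s are disjoint). Then file $f_i$ receives total rate $\tfrac{t\lambda_i}{X}\cdot\abs{\mathcal{P}_i}=\tfrac{t\lambda_i}{X}\cdot 2^{k-1}=\lambda_i$, and each point $\bv_j$ carries load $\sum_{i=1}^k\tfrac{t\lambda_i}{X}=\tfrac{t}{X}\sum_{i=1}^k\lambda_i\le t=n_j$, using that $\bv_j$ lies in exactly one member of each $\mathcal{P}_i$. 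Hence the allocation satisfies~\eqref{eq:2}, so $\cR(T)\subseteq\cS(\cG)$ and $n(\cR(T))\le t(2^k-1)$, matching the lower bound. I do not expect any deep difficulty; the one point that must be gotten exactly right is the partition property of $\mathcal{P}_i$, since it is precisely what lets every server be driven to its full capacity $n_j$ simultaneously and thereby makes the crude averaging bound of Theorem~\ref{theorem_general_lower_bound} achievable. The only subtlety on the lower-bound side is the non-strict-redundancy hypothesis of that theorem, which is why the explicit witness vectors above are needed.
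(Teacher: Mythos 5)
Your proposal is correct and follows essentially the same route as the paper: apply Theorem~\ref{theorem_general_lower_bound} for the lower bound (after checking non-strict-redundancy) and use the $t$-fold binary simplex code for achievability. The only difference is that you supply details the paper delegates elsewhere — explicit witness vectors for the non-redundancy hypothesis, and a self-contained allocation via the partition of $\PG(k-1,2)$ into $2^{k-1}$ recovery sets per file, where the paper simply invokes Proposition~\ref{lemma_service_rate_region_simplex} and the cited prior work on the service rate region of simplex codes.
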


Next, two more general lower bounds for ${n(\cR(T))}$, similar to that of Theorem~\ref{theorem_general_lower_bound}, are provided that are obtained in the search of finding a tighter lower bound for ${k \geq 3}$.

\begin{theorem} \label{theorem_general_lower_bound2}
For some integer $k\ge 2$, let $T\colon 2^{[k]}\to\bbN$  be a function such that none of the constraints $\sum_{i\in U}\lambda_i \le T(U)$ are strictly redundant in $\bbRp^k$. Then, for each ${i\in[k]}$, ${n(\cR(T))\ge \left\lceil\frac{\alpha_i+\beta_i}{2}\right\rceil}$ holds, where
  $$
    \alpha_i=\left\lceil\frac{\sum_{\emptyset\neq U\subseteq[k]\setminus\{i\}} T(U)}{2^{k-2}}\right\rceil,~ \beta_i=\left\lceil\frac{\sum_{\{i\}\subseteq U\subseteq[k]} T(U)}{2^{k-2}}\right\rceil.
  $$ 
\end{theorem}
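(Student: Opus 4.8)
The plan is to partition the set of $2^k-1$ hyperplane inequalities from Lemma~\ref{lemma_hyperplane_constraint} into two groups according to whether the unit vector $\be_i$ lies in the hyperplane or not, and then to average each group separately instead of averaging all of them together as in the proof of Theorem~\ref{theorem_general_lower_bound}. Fix $i\in[k]$. The hyperplanes $\cH$ of $\PG(k-1,2)$ split into those with $\be_i\in\cH$ (there are $2^{k-1}-1$ of them, the hyperplanes of the quotient space, which is a $\PG(k-2,2)$) and those with $\be_i\notin\cH$ (there are $2^{k-1}$ of them).

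First I would handle the group with $\be_i\notin\cH$. For such a hyperplane, $i\in\cE(\cH)$, and the right-hand side $\max\{\sum_{s\in\cE(\cH)}\lambda^{(i')}_s\}$ is a partial sum containing $\lambda_i$; one should recognize that as $\cH$ ranges over all $2^{k-1}$ hyperplanes avoiding $\be_i$, the sets $\cE(\cH)$ range over all subsets $U\subseteq[k]$ with $i\in U$, each attained exactly $2^{k-2}$ times --- here one uses the fact that the hyperplanes of $\PG(k-1,2)$ not through $\be_i$ correspond to affine hyperplanes in a $(k-1)$-space and each nonzero functional pattern on the remaining coordinates is hit $2^{k-2}$ times. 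Summing the inequalities of Lemma~\ref{lemma_hyperplane_constraint} over this group and using that each $n_j$ appears on the left-hand side the appropriate number of times (this count is uniform, so it collapses to a multiple of $n$), I get an inequality of the form $2^{k-2}\cdot n \ge \sum_{\{i\}\subseteq U\subseteq[k]} (\text{RHS-terms})\ge \sum_{\{i\}\subseteq U\subseteq[k]} T(U)$, where the last step needs the non-strict-redundancy hypothesis so that $T(U)$ is actually attained as a value $\sum_{s\in U}\lambda^{(i')}_s$ for some generating vector. Dividing by $2^{k-2}$ and taking ceilings (which is valid since $n$ is an integer) yields $n\ge\beta_i$. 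An entirely parallel argument over the $2^{k-1}-1$ hyperplanes through $\be_i$, where $\cE(\cH)$ ranges over all nonempty subsets of $[k]\setminus\{i\}$ with each hit $2^{k-2}$ times, yields $n\ge\alpha_i$.

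Then I would improve the combination: rather than just taking $n\ge\max(\alpha_i,\beta_i)$, sum one copy of the averaged inequality from each group. The left-hand side becomes (coefficient)$\cdot n$ again, and the right-hand side becomes $\ge \alpha_i+\beta_i$ after the ceiling-compatible estimates, giving $2n\ge\alpha_i+\beta_i$, hence $n(\cR(T))=n\ge\lceil(\alpha_i+\beta_i)/2\rceil$. The same argument works for every $i\in[k]$, which is the stated conclusion. The main obstacle I anticipate is the bookkeeping in the counting step: showing that over each of the two hyperplane groups the multiset $\{\cE(\cH)\}$ equals the family of relevant subsets with exact uniform multiplicity $2^{k-2}$, and simultaneously that each point multiplicity $n_j$ is covered a uniform number of times so that the left-hand sides genuinely collapse to multiples of $n$; a clean way to see both is to pass to the quotient projective space $\PG(k-1,2)/\langle\be_i\rangle$ for the first group and to an affine-space count for the second, rather than by brute-force incidence counting. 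A secondary subtlety is justifying the ceilings: one must first derive the linear (non-integer) inequalities $2^{k-2}n\ge\sum T(U)$, then invoke integrality of $n$ to replace the right-hand side by $\alpha_i 2^{k-2}$ and $\beta_i 2^{k-2}$ before adding, and only then take the outer ceiling --- doing the roundings in the wrong order would weaken the bound.
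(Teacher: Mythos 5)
Your partition of the hyperplanes of $\PG(k-1,2)$ according to whether they contain $\be_i$ is exactly the device used in the paper, and your remark about the order of the rounding steps is correct. However, there is a genuine error in the counting for the group of $2^{k-1}$ hyperplanes with $\be_i\notin\cH$: the left-hand sides do \emph{not} collapse to a uniform multiple of $n$. The point $\be_i$ itself lies outside \emph{all} $2^{k-1}$ hyperplanes of this group, whereas every other point of $\PG(k-1,2)$ lies outside exactly $2^{k-2}$ of them. Writing $\bar{i}$ for the index with $\bv_{\bar i}=\be_i$, the summed inequality is therefore $2^{k-1}n_{\bar i}+2^{k-2}\sum_{j\neq\bar i}n_j\ge\sum_{\{i\}\subseteq U\subseteq[k]}T(U)$, i.e.\ it bounds $2n_{\bar i}+\sum_{j\neq\bar i}n_j$ from below by $\beta_i$, not $n$. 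Your intermediate conclusion $n\ge\beta_i$ is in fact false: for $k=2$ with $T(\{1\})=T(\{2\})=1$ and $T(\{1,2\})=2$ one has $n(\cR(T))=2$ (two uncoded servers suffice), while $\beta_1=T(\{1\})+T(\{1,2\})=3$. (A smaller slip: the map $\cH\mapsto\cE(\cH)$ is a bijection onto the nonempty subsets of $[k]$, so each $U\ni i$ occurs once, not $2^{k-2}$ times; the factor $2^{k-2}$ belongs only to the point count on the left-hand side.)

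The repair is precisely the paper's bookkeeping: from the group of hyperplanes through $\be_i$ you get $2^{k-2}\sum_{j\neq\bar i}n_j\ge\sum_{\emptyset\neq U\subseteq[k]\setminus\{i\}}T(U)$ (here $n_{\bar i}$ has coefficient $0$, so $\sum_{j\neq\bar i}n_j\ge\alpha_i$ is valid and a fortiori $n\ge\alpha_i$), and from the other group you get $2n_{\bar i}+\sum_{j\neq\bar i}n_j\ge\beta_i$. Neither inequality alone bounds $n$ by $\beta_i$; but adding them, the over-count of $n_{\bar i}$ in the second is exactly compensated by its absence from the first, giving $2n\ge\alpha_i+\beta_i$ and hence the stated bound. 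So your final answer and overall strategy match the paper, but as written the argument passes through a false statement and needs the coefficient of $n_{\bar i}$ tracked explicitly rather than absorbed into a uniform multiple of $n$.
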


\begin{theorem}\label{theorem_general_lower_bound3}
For some integer $k\ge 2$, let $T\colon 2^{[k]}\to\bbN$  be a function such that none of the constraints $\sum_{i\in U}\lambda_i \le T(U)$ are strictly redundant in $\bbRp^k$. Then, for each $j\in[\ell]$ we have 
  $$
    n(\cR(T))\ge \left\lceil\frac{\sum_{\emptyset\neq U\subseteq[k] \,:\, \#(U\cap J)\equiv 0 \pmod 2} T(U)}{2^{k-2}}\right\rceil,
  $$  
where $J\subseteq [k]$ such that $\bv_j=\sum_{h\in J} \be _h$.
\end{theorem}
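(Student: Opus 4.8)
The plan is to mimic the averaging argument behind Theorem~\ref{theorem_general_lower_bound}, but instead of summing the hyperplane inequalities of Lemma~\ref{lemma_hyperplane_constraint} over \emph{all} $2^k-1$ hyperplanes, I would restrict the sum to a carefully chosen sub-family of hyperplanes: namely those that contain the fixed point $\bv_j=\sum_{h\in J}\be_h$. First I would fix $j\in[\ell]$ and let $J\subseteq[k]$ be the support so that $\bv_j=\sum_{h\in J}\be_h$. A hyperplane $\cH_{\mathbf c}=\{\bx\in\PG(k-1,2):\mathbf c^\top\bx=0\}$ contains $\bv_j$ precisely when $\sum_{h\in J}c_h=0$, i.e.\ when $\#(\mathrm{supp}(\mathbf c)\cap J)$ is even; there are exactly $2^{k-1}-1$ such nonzero $\mathbf c$. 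For each of these hyperplanes, Lemma~\ref{lemma_hyperplane_constraint} (combined with the no-strict-redundancy hypothesis, which guarantees that the right-hand side $\max_i\sum_{s\in\cE(\cH)}\lambda_s^{(i)}$ equals $T(\cE(\cH))$, where $\cE(\cH_{\mathbf c})=\mathrm{supp}(\mathbf c)$) gives
$$
\sum_{t\,:\,\mathbf v_t\notin\cH_{\mathbf c}} n_t \;\ge\; T\big(\mathrm{supp}(\mathbf c)\big).
$$
I would then sum these inequalities over all $\mathbf c$ with $\#(\mathrm{supp}(\mathbf c)\cap J)$ even and nonzero.

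The key combinatorial step is to evaluate, for each fixed point $\bv_t$ (equivalently each $U_t:=\mathrm{supp}(\bv_t)$), how many of the chosen hyperplanes fail to contain it, i.e.\ the number of nonzero $\mathbf c$ with $\#(\mathrm{supp}(\mathbf c)\cap J)$ even \emph{and} $\mathbf c^\top\bv_t=1$ (equivalently $\#(\mathrm{supp}(\mathbf c)\cap U_t)$ odd). This is a standard parity count over $\bbF_2^k$: imposing one even-parity linear condition and one odd-parity linear condition on $\mathbf c$. When $J$ and $U_t$ are "independent enough," these two conditions are independent, so the count is $2^{k-2}$; the degenerate cases ($U_t=\emptyset$ is excluded since $t\in[\ell]$; $U_t\subseteq$ or related to $J$) need to be checked but still yield exactly $2^{k-2}$ because the odd-parity condition alone already forces $U_t\neq\emptyset$ and is never implied by the even-parity condition on a nonempty set. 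Thus every multiplicity $n_t$ appears with coefficient exactly $2^{k-2}$ on the left-hand side, giving
$$
2^{k-2}\sum_{t\in[\ell]} n_t \;=\; 2^{k-2}\,n \;\ge\; \sum_{\emptyset\neq U\subseteq[k]\,:\,\#(U\cap J)\equiv 0\!\!\pmod 2} T(U),
$$
and dividing by $2^{k-2}$ and taking the ceiling (since $n$ is an integer) yields the claimed bound.

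I would organize the write-up as: (i) reduce to $T$ monotone/subadditive via Lemma~\ref{lemma_characterization_fixpoint_algo_transform} if needed, or just invoke the no-strict-redundancy hypothesis directly to identify each hyperplane's right-hand side with $T(\mathrm{supp}(\mathbf c))$; (ii) set up the restricted family of $2^{k-1}-1$ hyperplanes through $\bv_j$; (iii) do the parity double-count showing each $n_t$ gets coefficient $2^{k-2}$; (iv) conclude by summing and rounding. The main obstacle I anticipate is step (iii): one must be careful that the parity argument is uniform over \emph{all} $t\in[\ell]$, including the potentially degenerate cases where $U_t\cap J$ or its complement behaves specially, and one must correctly handle the exclusion of the zero vector $\mathbf c=\mathbf 0$ (which is not a hyperplane) without accidentally miscounting — checking that $\mathbf 0$ would have contributed to neither side so its omission is harmless. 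A secondary technical point is justifying that the no-strict-redundancy assumption indeed lets us replace $\max_i\sum_{s\in\cE(\cH)}\lambda_s^{(i)}$ by $T(\cE(\cH))$ for every hyperplane used; this should follow from the definition of $\cR(T)$ together with the fact that a non-strictly-redundant constraint is attained on $\cR(T)$, hence is "visible" to the generating set.
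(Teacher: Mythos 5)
Your overall route is the same as the paper's: restrict the hyperplane inequalities of Lemma~\ref{lemma_hyperplane_constraint} to the $2^{k-1}-1$ hyperplanes through $\bv_j$, identify these via the parity condition $\#(S(\cH)\cap J)\equiv 0 \pmod 2$, use the no-strict-redundancy hypothesis to replace the right-hand sides by $T(S(\cH))$, sum, and divide by $2^{k-2}$. However, your key combinatorial step (iii) contains a concrete error: it is \emph{not} true that every multiplicity $n_t$ appears with coefficient $2^{k-2}$. For $t=j$ the two parity conditions you impose on $\mathbf{c}$ are ``$\#(\mathrm{supp}(\mathbf{c})\cap J)$ even'' and ``$\#(\mathrm{supp}(\mathbf{c})\cap J)$ odd'', which are mutually exclusive rather than independent; equivalently, every hyperplane in your family contains $\bv_j$ by construction, so $\bv_j$ is never outside any of them and $n_j$ receives coefficient $0$. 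Your explicit assertion that the degenerate case still yields exactly $2^{k-2}$ is therefore false, and the claimed identity $\sum_{\cH}\sum_{t:\bv_t\notin\cH}n_t=2^{k-2}\sum_{t\in[\ell]}n_t$ does not hold.

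The theorem nevertheless survives, because the correct identity is $\sum_{\cH:\bv_j\in\cH}\sum_{t:\bv_t\notin\cH}n_t=2^{k-2}\sum_{t\in[\ell]\setminus\{j\}}n_t$, and since $n=\sum_{t\in[\ell]}n_t\ge\sum_{t\neq j}n_t$ and $n_j\ge 0$, one still obtains $2^{k-2}\,n\ge\sum_{\emptyset\neq U\subseteq[k]:\#(U\cap J)\equiv 0\pmod 2}T(U)$ and hence the stated bound after rounding. This is exactly how the paper closes the argument (it writes the left-hand side as $2^{k-2}\sum_{i\in[\ell]\setminus\{j\}}n_i$ and then uses $n\ge\sum_{i\neq j}n_i$). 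So the repair is one line, but as written your coefficient count fails in precisely the one case ($U_t=J$) that the whole construction is built around, and the remainder of your parity analysis (for $t\neq j$ the two linear functionals $\mathbf{c}\mapsto\mathbf{c}\cdot\bv_j$ and $\mathbf{c}\mapsto\mathbf{c}\cdot\bv_t$ are distinct and nonzero, hence independent, giving $2^{k-2}$) is correct and matches the paper's case analysis.
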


\begin{example}\label{ex_negative_1}
For some $x \in \bbN$, let $T\colon 2^{[3]}\to\bbN$ be defined via $T(\{1\})=T(\{2\})=T(\{3\})=T(\{1,2\})=T(\{1,3\})=x$ and $T(\{2,3\})=T(\{1,2,3\})=2x$. Based on Theorem~\ref{theorem_general_lower_bound}, we have $n(\cR(T))\ge\left\lceil\tfrac{9x}{4}\right\rceil$, and according to Theorem~\ref{theorem_general_lower_bound3}, considering $j=3$ we have ${n(\cR(T))\ge\left\lceil\tfrac{5x}{2}\right\rceil}$. Thus, for $x\geq3$, the lower bound obtained from Proposition~\ref{theorem_general_lower_bound3} is tighter than the one obtained from Theorem~\ref{theorem_general_lower_bound}. 
\end{example}

\subsection{Storage-Efficient Schemes for $k=2$}

Let w.l.o.g. (based on Lemma~\ref{lemma_characterization_fixpoint_algo_transform}) the function ${T\colon 2^{[2]}\to\bbN}$ be monotone, subadditive, and satisfy $T(\emptyset)=0$. Note that for ${k=1}$ each ${T\colon 2^{\{1\}}\to\mathbb{N}}$ is monotone and subadditive, while for ${k=2}$ the conditions can be summarized to
\begin{equation}\label{ie_monotone_subadditive_k_2}\nonumber
  \max\{T(\{1\}),T(\{2\})\} \le T(\{1,2\}) \le  T(\{1\})+T(\{2\}).  
\end{equation}  
The following Lemma describes the generating set of $\cR(T)$ for ${T\colon 2^{[2]}\to\bbN}$.
\begin{lemma}\label{lemma_characterization_generating_set_k_2}
If ${T\colon 2^{[2]}\to\bbN}$ is monotone, subadditive, and satisfies ${T(\emptyset)=0}$, the generating set of ${\cR(T)}$ is given by
\begin{align*}
S=\Big\{ \Big(T(\{1\}),T(\{1,2\})-T(\{1\})\Big),\Big(T(\{1,2\})-T(\{2\}),T(\{2\})\Big) \Big\}.
\end{align*}
\end{lemma}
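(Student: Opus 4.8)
The plan is to abbreviate $a:=T(\{1\})$, $b:=T(\{2\})$, $c:=T(\{1,2\})$, observe that for $T(\emptyset)=0$ the monotonicity and subadditivity of $T$ amount exactly to the chain $\max\{a,b\}\le c\le a+b$, and record that
\[
  \cR(T)=\bigl\{(\lambda_1,\lambda_2)\in\mathbb{R}_{\ge 0}^2 : \lambda_1\le a,\ \lambda_2\le b,\ \lambda_1+\lambda_2\le c\bigr\}.
\]
Then I would invoke the fact recorded just before the lemma: the unique generating set of the polytope $\cR(T)$ is obtained from its vertex set $V$ by deleting every vertex that is dominated coordinatewise by another vertex. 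So the proof reduces to (i) computing $V$ and (ii) performing the deletions.

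For step (i): $\cR(T)$ is a polygon in the nonnegative quadrant, so each of its vertices is the intersection of two of the five lines $\lambda_1=0$, $\lambda_2=0$, $\lambda_1=a$, $\lambda_2=b$, $\lambda_1+\lambda_2=c$. Running through the at most $\binom{5}{2}$ intersection points and retaining those that satisfy the remaining three inequalities yields
\[
  V=\bigl\{(0,0),\ (a,0),\ (0,b),\ (a,c-a),\ (c-b,b)\bigr\}.
\]
This is the step that genuinely uses $\max\{a,b\}\le c\le a+b$: for instance $(a,c-a)$ (the meeting point of $\lambda_1=a$ and $\lambda_1+\lambda_2=c$) is feasible precisely because $c-a\ge 0$ by monotonicity and $c-a\le b$ by subadditivity, and symmetrically for $(c-b,b)$; whereas the intersection of $\lambda_1=a$ with $\lambda_2=b$ survives only when $a+b\le c$, which under subadditivity forces $c=a+b$ and collapses $(a,b)=(a,c-a)=(c-b,b)$. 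One also has to check the degenerate configurations ($a=0$, $b=0$, $a=b$, $c=a+b$), but these only merge points already in the list, never create new vertices, so the description of $V$ stays correct.

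For step (ii): since $c-a\ge 0$ and $c-b\ge 0$, we have $(0,0)\le(a,c-a)$, $(a,0)\le(a,c-a)$, and $(0,b)\le(c-b,b)$, so those three vertices are deleted; moreover $(a,c-a)\le(c-b,b)$ would force $a\le c-b$, i.e. $c\ge a+b$, hence $c=a+b$ and the two points coincide, with the symmetric statement for the reverse inequality. Hence the generating set is
\[
  S=\bigl\{(a,c-a),\ (c-b,b)\bigr\}=\Bigl\{\bigl(T(\{1\}),T(\{1,2\})-T(\{1\})\bigr),\ \bigl(T(\{1,2\})-T(\{2\}),T(\{2\})\bigr)\Bigr\},
\]
which is the claim. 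I expect the only real effort to be the bookkeeping of feasibility and degeneracies in step (i); there is no conceptual obstacle. A self-contained alternative bypasses vertex enumeration altogether: both points of $S$ visibly lie in $\cR(T)$, so $\conv(S)\!\!\downarrow\subseteq\cR(T)$, and conversely any $\blambda\in\cR(T)$ can be raised coordinatewise onto the segment $[(a,c-a),(c-b,b)]$ by distributing the slack $c-\lambda_1-\lambda_2\ge 0$ between the two coordinates, which is possible because the available room $(a-\lambda_1)+(b-\lambda_2)=a+b-\lambda_1-\lambda_2$ is at least that slack by subadditivity, giving $\cR(T)\subseteq\conv(S)\!\!\downarrow$; minimality of $S$ is then immediate since each single point of $S$ generates only a rectangle missing the other point (and when $c=a+b$, $S$ is a singleton and minimality is trivial).
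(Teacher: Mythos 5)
Your proposal is correct, and your primary route differs from the paper's. The paper proves the lemma by directly establishing the two inclusions $\conv(S)\!\!\downarrow\subseteq\cR(T)$ (each point of $S$ satisfies the three constraints, plus convexity and $\cR(T)\!\!\downarrow=\cR(T)$) and $\cR(T)\subseteq\conv(S)\!\!\downarrow$ (raise $\blambda$ until some constraint is tight, then a short case analysis on which constraint is tight, using subadditivity to reduce the cases $\lambda_1=T(\{1\})$ and $\lambda_2=T(\{2\})$ to the case $\lambda_1+\lambda_2=T(\{1,2\})$). You instead enumerate the vertex set $V$ of the polygon and delete dominated vertices, invoking the remark stated just before the lemma that the generating set equals $V$ minus its dominated elements; your feasibility checks for $(a,c-a)$ and $(c-b,b)$ and your handling of the degenerate case $c=a+b$ are exactly where monotonicity and subadditivity enter, matching the role they play in the paper's case analysis. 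The vertex-enumeration route is arguably more systematic and also makes minimality visible (the paper leaves minimality implicit, resting on the same remark), at the cost of some bookkeeping over degenerate configurations; the ``self-contained alternative'' you sketch at the end is essentially the paper's own argument, with your slack-distribution step playing the role of the paper's ``increase $\blambda$ until tight'' reduction. Either route is complete once written out.
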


We remark that the cardinality of generating set of ${\cR(T)}$ in Lemma~\ref{lemma_characterization_generating_set_k_2} is $2$ or $1$, where the latter happens iff $T(\{1,2\})=T(\{1\})+T(\{2\})$.

\begin{lemma}\label{lemma_gen_size1_k2}
Let $\{\blambda\}$ be the generating set of $\cR$ and $\bn=(n_1,\cdots,n_{\ell})$ be an integral solution of the ILP of Corollary~\ref{cor_ILP_n_lb}. If $\blambda\in \bbRp^2$ and $\cG$ is the multiset corresponding to the $\bn$, then $\blambda\in\cS(\cG)$, i.e., there exists a feasible choice of $\alpha_Y$ satisfying (\ref{condition_demand_3})-(\ref{condition_pos_3}).      
\end{lemma}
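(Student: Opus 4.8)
\emph{Proof sketch.} The plan is to reduce the statement to the (tiny) geometry of $\PG(1,2)$ and then write down an explicit valid allocation governed by a single real parameter. Since $k=2$ we have $\ell=2^2-1=3$, the projective points are $\bv_1=\be_2$, $\bv_2=\be_1$ and $\bv_3=\be_1+\be_2$, and the reduced recovery sets are $\cY_1=\{\{2\},\{1,3\}\}$ (for file $1$) and $\cY_2=\{\{1\},\{2,3\}\}$ (for file $2$). The hyperplanes of $\PG(1,2)$ are exactly its three points; evaluating \eqref{ie_hyperplane_constraint} at $\cH=\{\bv_1\}$, $\cH=\{\bv_2\}$ and $\cH=\{\bv_3\}$ with the singleton generating set $\{\blambda\}$ (so $m=1$, and the right-hand sides are $\lambda_1$, $\lambda_2$ and $\lambda_1+\lambda_2$, respectively) shows that $\bn=(n_1,n_2,n_3)$ being a feasible (integral) solution of the ILP of Corollary~\ref{cor_ILP_n_lb} means exactly that
\begin{equation}\label{eq:pp-three}
n_2+n_3\ge\lambda_1,\qquad n_1+n_3\ge\lambda_2,\qquad n_1+n_2\ge\lambda_1+\lambda_2 .
\end{equation}

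Next I would introduce an ``offload parameter'' $s\in\bbR$ and, writing $s^{+}:=\max\{s,0\}$ and $s^{-}:=\max\{-s,0\}$ (so $s^{+}-s^{-}=s$ and $s^{+}+s^{-}=|s|$), put
$$
\alpha_{\{2\}}=\lambda_1-s^{+},\qquad \alpha_{\{1,3\}}=s^{+},\qquad \alpha_{\{1\}}=\lambda_2-s^{-},\qquad \alpha_{\{2,3\}}=s^{-}.
$$
For any $s$ the demand constraints \eqref{condition_demand_3} hold with equality. The nonnegativity constraints \eqref{condition_pos_3} reduce to $-\lambda_2\le s\le\lambda_1$, and the capacity constraints \eqref{condition_capacity_3} at nodes $1,2,3$ read $\alpha_{\{1\}}+\alpha_{\{1,3\}}=\lambda_2+s\le n_1$, $\alpha_{\{2\}}+\alpha_{\{2,3\}}=\lambda_1-s\le n_2$ and $\alpha_{\{1,3\}}+\alpha_{\{2,3\}}=|s|\le n_3$, i.e.\ $\lambda_1-n_2\le s\le n_1-\lambda_2$ and $-n_3\le s\le n_3$. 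So it only remains to find an $s$ in the intersection of the three intervals $[\lambda_1-n_2,\,n_1-\lambda_2]$, $[-\lambda_2,\,\lambda_1]$ and $[-n_3,\,n_3]$.

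To finish, I would check that this intersection is nonempty, i.e.\ that each of the three lower endpoints is at most each of the three upper endpoints. Three of the nine comparisons --- namely $\lambda_1-n_2\le n_1-\lambda_2$, $\lambda_1-n_2\le n_3$ and $-n_3\le n_1-\lambda_2$ --- are precisely the three inequalities of \eqref{eq:pp-three}, and the other six hold trivially from $\lambda_1,\lambda_2\ge 0$ and $n_1,n_2,n_3\ge 0$. Picking any such $s$ yields a choice of $\alpha_Y$ satisfying \eqref{condition_demand_3}--\eqref{condition_pos_3}, so $\blambda\in\cS(\cG)$; since service rate regions are downward closed this also gives $\{\blambda\}\!\!\downarrow\subseteq\cS(\cG)$, which is what is used later.

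I do not expect a genuinely hard step here; the argument is bookkeeping inside $\PG(1,2)$. The two spots needing care are (a) recording that the hyperplane $\{\be_1+\be_2\}$ yields the \emph{strong} inequality $n_1+n_2\ge\lambda_1+\lambda_2$ (not $n_1+n_2+n_3\ge\lambda_1+\lambda_2$), which is exactly what makes the first interval above nonempty, and (b) the brief case split on the sign of $s$ needed to verify that \eqref{condition_pos_3} collapses to $-\lambda_2\le s\le\lambda_1$ and that the node-capacity expressions simplify as stated. One could instead certify feasibility of the allocation LP by Farkas' lemma, but the explicit construction is shorter and hands back a concrete allocation.
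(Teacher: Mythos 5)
Your proposal is correct and follows essentially the same route as the paper: both reduce the ILP of Corollary~\ref{cor_ILP_n_lb} for $k=2$ to the three inequalities $n_2+n_3\ge\lambda_1$, $n_1+n_3\ge\lambda_2$, $n_1+n_2\ge\lambda_1+\lambda_2$ and then exhibit an explicit feasible allocation on the recovery sets $\cY_1=\{\{2\},\{1,3\}\}$, $\cY_2=\{\{1\},\{2,3\}\}$. The paper's concrete choice $\alpha_{\{2\}}=\min\{n_2,\lambda_1\}$, $\alpha_{\{1,3\}}=\max\{0,\lambda_1-n_2\}$, etc.\ is exactly the point $s=(\lambda_1-n_2)^{+}-(\lambda_2-n_1)^{+}$ of your feasible interval, so your one-parameter argument subsumes the paper's case analysis.
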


\begin{definition}\label{def_simplex_code}
For a set ${\emptyset\neq S\subset\bbN}$, we denote by $\simplex(S)$ the set of non-zero vectors in $\left\langle\left\{\be_i\mid i\in S\right\}\right\rangle$ over $\bbF_2$. 
\end{definition}

The following Proposition states \cite[Theorem1]{ServiceGeometric:KazemiKS20}, adapted to the form useful
for our setup.
 
\begin{proposition}\label{lemma_service_rate_region_simplex}
For each ${\emptyset\neq S\subseteq[k]}$, $\#\simplex(S)=2^{s}-1$ and $\cS(\simplex(S))=\cR(T)$, where $s=\#S$ and $T\colon 2^{[k]}\to\bbN$ is given by $T(U)=2^{s-1}$ for all $U\subseteq[k]$ satisfying $U\cap S\neq \emptyset$ and $T(U)=0$ otherwise (for all $U\subseteq[k]$ with $U\cap S = \emptyset$).
\end{proposition}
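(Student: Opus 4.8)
The plan is to treat the cardinality and the service rate region separately, and for the latter to reduce to the known characterization of the simplex code. The cardinality $\#\simplex(S)=2^s-1$ is immediate: by Definition~\ref{def_simplex_code}, $\simplex(S)$ is the set of non-zero vectors of the $\bbF_2$-subspace $\left\langle\left\{\be_i\mid i\in S\right\}\right\rangle$, which has algebraic dimension $s=\#S$, hence $2^s$ elements in total. For the service rate region, set $\cG=\simplex(S)$ (every vector taken with multiplicity one, so $n=2^s-1$) and let $\bG\in\bbF_2^{k\times(2^s-1)}$ be an associated generator matrix. Since every column of $\bG$ lies in $\left\langle\left\{\be_i\mid i\in S\right\}\right\rangle$, the rows of $\bG$ indexed by $[k]\setminus S$ are identically zero, and the submatrix formed by the rows indexed by $S$ is, up to a permutation of columns, the generator matrix of the binary $[2^s-1,s]_2$ simplex code.

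First I would dispose of the files outside $S$. For $i\in[k]\setminus S$, every column $\bv_j\in\cG$ has a zero in coordinate $i$, so $\be_i\notin\left\langle\left\{\bv\mid\bv\in\cG\right\}\right\rangle$ and $\cY_i=\emptyset$; the constraint \eqref{condition_demand_3} (with an empty sum on its left-hand side) then forces $\lambda_i=0$ for every $\blambda\in\cS(\cG)$. On the other side, the defining inequality of $\cR(T)$ for $U=\{i\}$ reads $\lambda_i\le T(\{i\})=0$ because $\{i\}\cap S=\emptyset$, so $\cR(T)$ is also contained in the coordinate subspace $\left\{\blambda\in\bbRp^k\mid\lambda_i=0\ \forall\, i\notin S\right\}$. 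It thus remains to compare the restrictions of $\cS(\cG)$ and $\cR(T)$ to the coordinates indexed by $S$.

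Next I would argue that, for $i\in S$, the reduced recovery sets of $f_i$ relative to $\cG$ are in bijection with those of the $[2^s-1,s]_2$ simplex code: a relation $\sum_{j\in Y}\bv_j=\be_i$ over $\bbF_2^k$ with all $\bv_j$ supported on $S$ and $i\in S$ holds if and only if the analogous relation holds after restricting each vector to the coordinates in $S$, and this restriction is a bijection from $\simplex(S)$ onto the non-zero vectors of $\bbF_2^S$. Hence the linear system \eqref{eq:2} defining $\cS(\cG)$, after deleting the variables $\lambda_i$ with $i\notin S$ and the now-trivial constraints, is exactly the system defining the service rate region of the simplex code. Applying \cite[Theorem~1]{ServiceGeometric:KazemiKS20} (in the form recalled here), this region equals $\left\{\blambda_S\in\bbRp^S\mid \sum_{i\in V}\lambda_i\le 2^{s-1}\ \text{ for all }\emptyset\neq V\subseteq S\right\}$.

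It then remains to check that this description matches $\cR(T)$. Unfolding Definition~\ref{def:regionfunction} and splitting the index sets $\emptyset\neq U\subseteq[k]$ according to whether $U\cap S=\emptyset$: those with $U\cap S=\emptyset$ contribute $\sum_{i\in U}\lambda_i\le T(U)=0$, whose conjunction (equivalently, the conjunction over the singletons $\{i\}$, $i\notin S$) is precisely $\lambda_i=0$ for $i\notin S$; given this, a set $U$ with $U\cap S\neq\emptyset$ contributes $\sum_{i\in U\cap S}\lambda_i\le T(U)=2^{s-1}$, and as $U$ ranges over such sets, $U\cap S$ ranges over all non-empty subsets of $S$. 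This is exactly the description obtained above, so $\cS(\simplex(S))=\cR(T)$. The only step I expect to require care is the bijection of recovery sets in the previous paragraph together with the verification that \cite[Theorem~1]{ServiceGeometric:KazemiKS20} transfers verbatim to the padded system; the rest is routine bookkeeping with the defining inequalities.
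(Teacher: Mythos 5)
Your proposal is correct. The paper itself gives no proof of this proposition --- it is presented as a restatement of \cite[Theorem~1]{ServiceGeometric:KazemiKS20} ``adapted to the form useful for our setup'' --- and your argument (reduce to that cited theorem via the bijection between $\simplex(S)$ and the nonzero vectors of $\bbF_2^S$, force $\lambda_i=0$ for $i\notin S$ on both sides, and match the remaining inequalities) is exactly the intended adaptation, with the bookkeeping the paper leaves implicit carried out correctly.
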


\begin{theorem}\label{thm_two_server}
For the desired service rate region $\cR$ given by
  \begin{align*}
    \cR= \Big\{ \blambda \in \bbRp^2 \,:\, \lambda_1\le X, \lambda_2\le Y,\lambda_1+\lambda_2\le\Sigma \Big\}, 
  \end{align*}
  where $X,Y,\Sigma$ are non-negative integers with $\max\{X,Y\}\le\Sigma\le X+Y$, we have $n(\cR)=\left\lceil \frac{X+Y+\Sigma}{2}\right\rceil$.
\end{theorem}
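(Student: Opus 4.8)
The plan is to prove the two inequalities $n(\cR)\ge\left\lceil\frac{X+Y+\Sigma}{2}\right\rceil$ and $n(\cR)\le\left\lceil\frac{X+Y+\Sigma}{2}\right\rceil$ separately. For the lower bound, I would first reduce to the case where none of the three defining constraints is strictly redundant. Since $\max\{X,Y\}\le\Sigma\le X+Y$, the constraint $\lambda_1\le X$ is attained by $(X,\Sigma-X)\in\cR$ (which lies in $\bbRp^2$ because $\Sigma\ge X$ and $\Sigma\le X+Y$), and symmetrically for $\lambda_2\le Y$; the constraint $\lambda_1+\lambda_2\le\Sigma$ is attained by the same points. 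So for $k=2$ with the monotone-subadditive hypothesis, no constraint is strictly redundant, and Theorem~\ref{theorem_general_lower_bound} applies directly: with $T(\{1\})=X$, $T(\{2\})=Y$, $T(\{1,2\})=\Sigma$, it gives $n(\cR)\ge\left\lceil\frac{X+Y+\Sigma}{2^{k-1}}\right\rceil=\left\lceil\frac{X+Y+\Sigma}{2}\right\rceil$. (Here $\cR=\cR(T)$ since $\cR$ already equals its own lower set and is a polytope.)

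For the upper bound I need to exhibit a multiset $\cG$ of points in $\PG(1,2)=\{\bv_1,\bv_2,\bv_3\}$ (where $\bv_1=\be_2$, $\bv_2=\be_1$, $\bv_3=\be_1+\be_2$) with $\#\cG=\left\lceil\frac{X+Y+\Sigma}{2}\right\rceil$ and $\cR\subseteq\cS(\cG)$. The natural candidate is built from the generating set of $\cR(T)$ given in Lemma~\ref{lemma_characterization_generating_set_k_2}, namely $S=\{(X,\Sigma-X),(\Sigma-Y,Y)\}$. By Lemma~\ref{prop:1} and Lemma~\ref{prop:2} it suffices to cover $\conv(S)\!\!\downarrow$. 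I would pick multiplicities $n_2$ (for $\be_1$), $n_1$ (for $\be_2$), $n_3$ (for $\be_1+\be_2$), say roughly $n_3=X+Y-\Sigma$ copies of $\be_1+\be_2$ (valid since $\Sigma\le X+Y$), together with $n_2=\Sigma-Y$ copies of $\be_1$ and $n_1=\Sigma-X$ copies of $\be_2$, which totals $X+Y-\Sigma+(\Sigma-Y)+(\Sigma-X)=\Sigma$; when $X+Y+\Sigma$ is odd one extra copy of a suitable point is added so the total becomes $\left\lceil\frac{X+Y+\Sigma}{2}\right\rceil$ (note $\left\lceil\frac{X+Y+\Sigma}{2}\right\rceil\ge\Sigma$ iff $X+Y\ge\Sigma$, which holds). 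Then I verify $\cR\subseteq\cS(\cG)$ by checking, via the characterization \eqref{eq:2}, that each vertex of the generating set can be served: for $(X,\Sigma-X)$, route $X$ units of file-$1$ demand using $n_2$ copies of $\be_1$ and $n_3$ copies of $\be_1+\be_2$ (recovery sets $\{2\}$ and $\{1,3\}$ in the indexing of the excerpt — wait, with $\bv_1=\be_2,\bv_2=\be_1,\bv_3=\be_1+\be_2$ the recovery sets for $f_1$ are $\{2\}$ and $\{1,3\}$), and $\Sigma-X$ units of file-$2$ demand using the $n_1$ copies of $\be_2$; then invoke Lemma~\ref{lemma_gen_size1_k2} (or argue directly by LP feasibility of the flow) to extend from the vertices to all of $\cR$. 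An alternative, cleaner route for the covering step: decompose $\cR(T)$ as a "sum" of simplex-code regions using Proposition~\ref{lemma_service_rate_region_simplex} — e.g.\ $a$ copies of $\simplex(\{1,2\})$ contribute $T'(U)=a$ to every nonempty $U$, while $b$ copies of $\simplex(\{1\})$ and $c$ copies of $\simplex(\{2\})$ contribute to the singletons — and solve $a+b=X$, $a+c=Y$, $a=\Sigma-\text{(something)}$... more precisely choose $a=X+Y-\Sigma$, $b=\Sigma-Y$, $c=\Sigma-X$, giving total points $3a/\!\!\ldots$ hmm, $\#\simplex(\{1,2\})=3$, so that overshoots; so the direct multiplicity construction above is preferable.

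The main obstacle I anticipate is the parity/rounding bookkeeping in the upper bound: when $X+Y+\Sigma$ is odd, I must place the single extra point so that $\cR$ is still covered, and I must double-check that the total $\Sigma$ (or $\Sigma+1$) never exceeds $\left\lceil\frac{X+Y+\Sigma}{2}\right\rceil$ and that the chosen multiplicities are all non-negative across the full range $\max\{X,Y\}\le\Sigma\le X+Y$ — boundary cases $\Sigma=X+Y$ (where $n_3=0$ and the generating set degenerates to a single point, handled by Lemma~\ref{lemma_gen_size1_k2}) and $\Sigma=\max\{X,Y\}$ deserve separate attention. A secondary subtlety is confirming that the flow constructed at the two generating vertices can be combined into a single allocation valid on all of $\conv(S)\!\!\downarrow$; this should follow because the service rate region is convex and downward-closed (so it suffices to hit the vertices of the generating set), but I would state this convexity reduction explicitly, citing Lemma~\ref{prop:1} and Lemma~\ref{prop:2}.
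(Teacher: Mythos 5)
Your converse is sound and is essentially the paper's argument: you check that under $\max\{X,Y\}\le\Sigma\le X+Y$ none of the three constraints is strictly redundant and invoke Theorem~\ref{theorem_general_lower_bound} (the paper gives the same bound both this way and by summing the three hyperplane inequalities from Lemma~\ref{lemma_hyperplane_constraint} directly).

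The achievability half has a genuine gap. The multiset you actually write down --- $\Sigma-Y$ copies of $\be_1$, $\Sigma-X$ copies of $\be_2$, and $L:=X+Y-\Sigma$ copies of $\be_1+\be_2$ --- fails to cover $\cR$ whenever $L>0$. Every recovery set ($\{\be_1\}$ and $\{\be_2,\be_1+\be_2\}$ for $f_1$; $\{\be_2\}$ and $\{\be_1,\be_1+\be_2\}$ for $f_2$) meets $\{\be_1,\be_2\}$, so any served $\blambda$ obeys $\lambda_1+\lambda_2\le n(\be_1)+n(\be_2)=2\Sigma-X-Y=\Sigma-L<\Sigma$; this is exactly the necessary hyperplane constraint for the hyperplane $x_1+x_2=0$, which your choice violates by $L$. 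Correspondingly your total is $\Sigma$ points while $\left\lceil\tfrac{X+Y+\Sigma}{2}\right\rceil=\Sigma+\left\lceil\tfrac{L}{2}\right\rceil$, so you are short by $\left\lceil L/2\right\rceil$ points, not by ``one point when the sum is odd'': e.g.\ for $X=Y=\Sigma=2$ your multiset has $2$ points but $n(\cR)=3$. The fix is not to append a single extra point but to rebalance: the paper takes $\Sigma-Y$ copies of $\be_1$, $\Sigma-X$ copies of $\be_2$, and $L/2$ ``copies'' of $\simplex(\{1,2\})$, meaning $\lfloor L/2\rfloor$ full triples $\{\be_1,\be_2,\be_1+\be_2\}$ plus, when $L$ is odd, the pair $\{\be_1,\be_2\}$. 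Thus $\be_1+\be_2$ gets multiplicity only $\lfloor L/2\rfloor$ while $\be_1$ and $\be_2$ each gain $\left\lceil L/2\right\rceil$ extra copies; the third block alone serves $\{\blambda:\lambda_1\le L,\lambda_2\le L,\lambda_1+\lambda_2\le L\}$ and the three blocks together cover $\cR$ with exactly $\left\lceil\tfrac{X+Y+\Sigma}{2}\right\rceil$ points. This is precisely the fractional-simplex decomposition you sketched and then discarded as ``overshooting''; it does not overshoot once half-copies are allowed, and without this rebalancing your construction cannot be salvaged.
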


\appendix[Proofs of Lemmas, Theorems and Corollaries]

\begin{proof}[Proof of Lemma~\ref{prop:1}]
  It suffices to observe that the service rate region $\cS(\bG)$ of every generator matrix $\bG\in\mathbb{F}_2^{k\times n}$ is convex. For more details, see~\cite{ServiceGeometric:KazemiKS20}.
\end{proof}

\begin{proof}[Proof of Lemma~\ref{prop:2}]
  It suffices to observe that the service rate region $\cS(\bG)$ of every generator matrix $\bG\in\mathbb{F}_2^{k\times n}$ is its own lower set, i.e., $\cS(\bG)=\cS(\bG)\!\!\downarrow$. 
\end{proof}

\begin{proof}[Proof of Lemma~\ref{lemma_characterization_fixpoint_algo_transform}]
Consider the function $T\colon 2^{[k]}\to\mathbb{N}$ with $T(\emptyset)=0$. Let $T'$ be obtained as the result of the following algorithm applied to $T$.

\begin{algorithmic}
  \FOR {each $S\subseteq\{1,\dots,k\}$}
    \STATE {$T'(S)\leftarrow T(S)$}
  \ENDFOR
  \STATE{ $changed\leftarrow \text{\texttt{true}}$}
  \WHILE {$changed=true$}
    \STATE{ $changed\leftarrow \text{\texttt{false}}$}
    \FOR {each $S\subseteq \{1,\dots,k\}$}
      \FOR {each $\emptyset\neq U\subsetneq S$}
        \IF {$T'(S)>T'(U)+T'(S\backslash U)$}
          \STATE {$T'(S)\leftarrow T'(U)+T'(S\backslash U)$}
          \STATE{ $changed\leftarrow \text{\texttt{true}}$}
        \ENDIF
      \ENDFOR
      \FOR {each $S\subsetneq V\subseteq \{1,\dots,k\}$}
        \IF {$T'(S)>T'(V)$}
          \STATE {$T'(S)\leftarrow T'(V)$}
          \STATE{ $changed\leftarrow \text{\texttt{true}}$}
        \ENDIF  
      \ENDFOR
    \ENDFOR
  \ENDWHILE    
\end{algorithmic}
We remark that based on the above algorithm, the function $T'$ is subadditive, i.e., we have $T'(U\cup V) \leq T'(U)+T'(V)$, and monotone, that is, we have $T'(U)\le T'(V)$ for all $\emptyset\subseteq U\subseteq V\subseteq [k]$. Now, we need to prove that $\cR(T)=\cR(T')$. 

After the first initializing loop we obviously have $\cR(T)=\cR(T')$. Now, let us consider a single step in which $T'(S)$ is replaced by either $T'(U)+T'(S\backslash U)$ or $T'(V)$. Inductively we know that each $\blambda\in\cR(T')$ satisfies $\sum_{i\in S'} \lambda_i\le T'(S')$ for all $S'\subseteq [k]$. Since this especially holds for ${S'=U}$, ${S'=S\backslash U}$, and ${S'=V}$ we also have 
  $$
    \sum_{i\in S} \lambda_i\le T'(U)+T'(S\backslash U)
  $$
and
  $$
    \sum_{i\in S} \lambda_i\overset{\blambda\ge 0}{\le} \sum_{i\in V} \lambda_i\le T'(V).
  $$
So, after each replacement we still have $\cR(T)=\cR(T')$. In order to show that the algorithm terminates let 
  $$
    \varepsilon=\min\{T(U)-T(V)\mid \emptyset\subseteq U,V\subseteq [k],T(U)-T(V)\}.
  $$ 
By induction over the number of replacements we can easily show that at each time after the initialization loop , we have
  $$\varepsilon\le \min\{T'(U)-T'(V)\mid \emptyset\subseteq U,V\subseteq [k], T'(U)-T'(V)\}
  $$ 
Thus, every replacement reduces the value of $\sum_{S\subseteq[k]} T'(S)$ by at least $\varepsilon$, so that the algorithm terminates after at least  $(\sum_{S\subseteq[k]} T(S))/\varepsilon+1$ iterations of the while loop. 

Note that since in the last iteration of the while loop non of the if-conditions were true, if we apply the algorithm again on $T'$ and obtain $T''$, then $T'=T''$. 
\end{proof}

\begin{proof}[Proof of Proposition~\ref{prop_ILP_n_exact}]
The multiset of points $\cG$ is uniquely characterized by the integer multiplicities $n_j$, $j\in[\ell]$. Thus, it is easy to verify that the stated ILP formulation minimizes the code length $n=\sum_{j\in[\ell]} n_j$ and ensures that $\blambda^{(i)}\in\cS(\cG)$ by using the characterization \eqref{condition_demand_3}--\eqref{condition_pos_3} for each $i\in[m]$. 
\end{proof}

\begin{proof}[Proof of Lemma~\ref{lemma_hyperplane_constraint}]
Let $i\in [m]$ be an arbitrary index. From the ILP of Proposition~\ref{prop_ILP_n_exact}, we conclude that 
  \begin{equation}
    \label{ie_hpc_1}
    \sum_{Y\in\mathcal{Y}_s} \alpha^{i}_Y \ge \lambda^{(i)}_s
  \end{equation}
Since ${\alpha^i_Y\ge 0}$, for each $s\in\cE(\cH)$ we have
  \begin{equation}
    \label{ie_hpc_2}
    n_j\ge \sum_{Y\in \mathcal{Y}\,:\, j\in Y} \alpha^{i}_Y{\ge} \sum_{s\in\cE(\cH)}\,\,\sum_{Y\in \mathcal{Y}_{s}\,:\, j\in Y} \alpha^{i}_Y 
  \end{equation}

Let $J=\{j\in[\ell] \mid \bv_j\in \PG(k-1,2) \setminus \cH\}$. Thus, we have
\begin{align*}
   \sum_{j \in J} n_j\,&\ge\, \sum_{j \in J} \sum_{s\in\cE(\cH)}\,\,\sum_{Y\in \mathcal{Y}_{s}\,:\, j\in Y} \alpha^{i}_Y 
     \\
    &= \sum_{s\in\cE(\cH)} \sum_{j \in J} \sum_{Y\in \mathcal{Y}_{s}\,:\, j\in Y} \alpha^{i}_Y.   
\end{align*}

The unit vectors $\be_s$ with index $s \in \cE(\cH)$ are not contained in the hyperplane $\mathcal{H}$. Thus, for each $Y\in \mathcal{Y}_{s}$ with $s\in\cE(\cH)$ there exists certainly an index $j\in[\ell]$ with $j\in Y$ such that $\bv_j\in \PG(k-1,2) \setminus \cH$. Thus, from Inequality~(\ref{ie_hpc_1}) we conclude
 $$
  \sum_{j\in [\ell]\,:\, \bv_j\in \PG(k-1,2) \setminus \cH} n_j\,\ge\, \sum_{s\in\cE(\cH)} \sum_{Y\in \mathcal{Y}_{s}} \alpha^{i}_Y\ge \sum_{s\in\cE(\cH)} \lambda^{(i)}_s  
 $$    

\end{proof}

\begin{proof}[Proof of Theorem~\ref{theorem_general_lower_bound}]
We observe that each hyperplane $\cH$ in $\PG(k-1,2)$ can be uniquely characterized by a subset $\emptyset \neq S(\cH)\subseteq[k]$ such that $S(\cH)=\left\{i\in[k]\mid \be_i\notin\cH\right\}$. Let w.l.o.g. $\left\{\blambda^{(1)},\dots,\blambda^{(m)}\right\}$ be the generating set of $\cR(T)$. Due to the fact that none of the constraints $\sum_{i\in U}\lambda_i \le T(U)$ is strictly redundant, we have
$$
  \max \left\{ \sum_{s\in S(\cH)} \lambda^{(i)}_s \mid i\in [m]\right\}= T(S(\cH))
$$
  
Thus, for each hyperplane $\cH$ of $\PG(k-1,2)$, by applying Lemma~\ref{lemma_hyperplane_constraint} and replacing $T(S(\cH))$ in the right hand side of inequality~\eqref{ie_hyperplane_constraint}, we get 
$$
  \sum_{j\in [\ell]\,:\, \bv_j\in \PG(k-1,2) \setminus \cH} n_j \ge T(S(\cH)),
$$  
where ${S(\cH)=\left\{i\in[k]\,:\, \be_i\notin\cH\right\}}$. Since there are $2^k-1$ hyperplanes in $\PG(k-1,2)$, we have $2^k-1$ such inequalities, each of which can be uniquely characterized by subset $S(\cH)$. For each ${j\in[\ell]}$, one can easily verify that ${\bv_j\notin\cH}$ for exactly $2^{k-1}$ hyperplanes $\cH$. This means that each $n_j$ for ${j\in[\ell]}$ appears in the left side of $2^{k-1}$ inequalities. Thus, summing all of the $2^k-1$ inequalities, dividing by $2^{k-1}$ and replacing variable $S(\cH)$ with variable $U$, yields 
$$
  n=\sum_{j\in[\ell]} n_j\ge \frac{\sum_{\emptyset\neq U\subseteq[k]} T(U)}{2^{k-1}}.
$$
Finally, since $n$ has to be an integer, we have
$$
  n\ge \lceil\frac{\sum_{\emptyset\neq U\subseteq[k]} T(U)}{2^{k-1}}\rceil.
$$
\end{proof}

\begin{proof}[Proof of Corollary~\ref{coro_simplex}]
One can easily check that none of the constraints ${\sum_{i\in U}\lambda_i \le T(U)}$ are strictly redundant in ${\bbRp^k}$. Thus, Theorem~\ref{theorem_general_lower_bound} can be applied. Thus,
$$
  n\ge \Big\lceil\frac{\sum_{\emptyset\neq U\subseteq[k]} T(U)}{2^{k-1}}\Big\rceil=\Big\lceil\frac{X\cdot\left(2^k-1\right)}{2^{k-1}}\Big\rceil.
$$
The generating set of $\cR(T)$ is given by $\left\{X\cdot \be_i\mid i\in[k]\right\}$. Thus, for $X=t\cdot 2^{k-1}$, a $t$-fold $k$-dimensional binary simplex code achieves the desired service rate region. For more details, see~\cite{ServiceGeometric:KazemiKS20}.
\end{proof}

\begin{proof}[Proof of Theorem~\ref{theorem_general_lower_bound2}]
Based on the same reasoning used in the proof of Theorem~\ref{theorem_general_lower_bound}, it can be shown that for each hyperplane $\cH$ of $\PG(k-1,2)$, we have
\begin{equation}\label{ie_hyp_bound_T}
  \sum_{j\in [\ell]\,:\, \bv_j\in \PG(k-1,2) \setminus \cH} n_j \ge T(S(\cH)),
\end{equation}  
where $S(\cH)=\left\{i\in[k]\,:\, \be_i\notin\cH\right\}$. Let $i\in[k]$ be arbitrary but fix and ${\bar{i}=2^{k-i}}$ such that ${\bv_{\bar{i}}=\be_i}$. Then, we proceed by summing Inequality~\eqref{ie_hyp_bound_T} for all subsets $\emptyset \neq S(\cH)\subseteq[k]\setminus\{i\}$ and replacing $S(\cH)$ everywhere with $U$, that gives
\begin{equation}\label{alpha_bound}
    2^{k-2}\cdot \sum_{j\in[\ell]\setminus\{\bar{i}\}} n_j\ge \sum_{\emptyset\neq U\subseteq[k]\setminus\{i\}} T(U).
\end{equation}  
Now, summing Inequality~\eqref{ie_hyp_bound_T} for all ${\{i\} \subseteq S(\cH)\subseteq[k]}$ and replacing $S(\cH)$ everywhere with $U$ gives
\begin{equation}\label{beta_bound}
    2^{k-1} \cdot n_{\bar{i}} +2^{k-2}\cdot \sum_{j\in[\ell]\setminus\{\bar{i}\}} n_j\ge \sum_{\{i\}\subseteq U\subseteq[k]} T(U).
\end{equation} 
Since the $n_j$s are integers, from \eqref{alpha_bound} and \eqref{beta_bound}, we get $\sum_{j\in[\ell]\setminus\{\bar{i}\}} n_j\ge \alpha_i$ and $2n_{\bar{i}}+\sum_{j\in[\ell]\setminus\{\bar{i}\}} n_j\ge\beta_i$, respectively, where
 $$
    \alpha_i=\left\lceil\frac{\sum_{\emptyset\neq U\subseteq[k]\setminus\{i\}} T(U)}{2^{k-2}}\right\rceil, ~~
    \beta_i=\left\lceil\frac{\sum_{\{i\}\subseteq U\subseteq[k]} T(U)}{2^{k-2}}\right\rceil.
 $$
 
Dividing the sum of these two inequalities by $2$ gives
  $$
    n=n_{\bar{i}}+\sum_{j\in[\ell]\setminus\{\bar{i}\}} n_j\ge \frac{\alpha_i+\beta_i}{2},
  $$   
Since $n$ has to be an integer, we have $n \ge \left\lceil\frac{\alpha_i+\beta_i}{2}\right\rceil
$.
\end{proof}

\begin{proof}[Proof of Theorem~\ref{theorem_general_lower_bound3}]
Similar to the proof of Theorems \ref{theorem_general_lower_bound} and \ref{theorem_general_lower_bound2}, it can be shown that for each hyperplane $\cH$ in $\PG(k-1,2)$, we have
\begin{equation}\label{ie_hyp_bound_T3}
    \sum_{i\in [\ell]\,:\, \bv_i\in \PG(k-1,2) \setminus \cH} n_i \ge T(S(\cH)),
\end{equation}  
where $S(\cH)=\left\{i\in[k]\,:\, \be_i\notin\cH\right\}$. For each index $i \in [\ell]$, it can be easily confirmed that $\bv_i\notin\cH$ for $2^{k-1}$ hyperplanes $\cH$ and ${\bv_i\in\cH}$ for $2^{k-1}-1$ hyperplanes $\cH$ of $\PG(k-1,2)$. Let $j\in[\ell]$ be arbitrary but fix. Our aim is to sum Inequality~\eqref{ie_hyp_bound_T3} over all ${2^{k-1}-1}$ hyperplanes $\cH$ that contain $\bv_j$. 

We proceed by proving a claim that  $\bv_j=\sum_{h\in J}\be_h\in\cH$ iff $\#(U\cap J)\equiv 0 \pmod 2$, where $U=S(\cH)$. We consider two cases: (i) If $\#U=1$, then w.l.o.g. assume $U=\{x\}$ for some $x\in[k]$. A basis of $\cH$ is given by $\left\{\be_y\mid y\in[k]\setminus\{x\}\right\}$. Thus, $\bv_j\in \cH$ iff $x\notin J$, i.e., $\#(U\cap J)\equiv 0 \pmod 2$. (ii) If $\#U\ge 2$, then for some arbitrary element $x\in U$, a basis of $\cH$ is given by the set
  $$
    \left\{\be_y\mid y\in [k]\backslash U\right\}\cup \left\{\be_x+\be_z\mid z\in U\setminus \{x\}\right\}
  $$
In this case, it is easy to see that ${\bv_j\in \cH}$ iff $\#(U\cap J)=2p$ for some $p \in \bbZ_{\geq 0}$, i.e., $\#(U\cap J)\equiv 0 \pmod 2$. Thus, the claim is proved. Now, by summing Inequality~\eqref{ie_hyp_bound_T3} over all hyperplanes $\cH$ that contain $\bv_j$, we obtain 
\begin{align*}
    2^{k-2} \cdot \sum_{i\in[\ell]\setminus \{j\}} n_i&=\sum_{\text{hyperplane }\cH\,:\,\bv_j\in\cH}~\sum_{i \in [\ell]\,:\, \bv_i\notin\cH} n_i \\
    &\ge 
    \sum_{\emptyset\neq U\subseteq[k] \,:\, \#(U\cap J)\equiv 0 \pmod 2} T(U).   
\end{align*}
Since $n \geq \sum_{i\in[\ell]\setminus \{j\}} n_i$ and $n$ is an integer, we have 
  $$
    n \ge \left\lceil\frac{\sum_{\emptyset\neq U\subseteq[k] \,:\, \#(U\cap J)\equiv 0 \pmod 2} T(U)}{2^{k-2}}\right\rceil.
  $$ 
\end{proof}

\begin{proof}[Proof of Lemma~\ref{lemma_characterization_generating_set_k_2}]
According to the Definition~\ref{def:regionfunction}, $\cR(T)$ is the set of all vectors ${\blambda\in\mathbb{R}_{\ge 0}^2}$ that satisfy $\lambda_1\le T(\{1\}), \lambda_2\le T(\{2\})$ and $\lambda_1+\lambda_2 \le T(\{1,2\})$. Based on Definition~\ref{def:genset}, $S\subset\mathbb{R}_{\ge 0}^2$ is a generating set of $\cR(T)$ if $\conv(S)\!\!\downarrow=\cR(T)$. 

The proof consists of two parts. First, we need to show that $\conv(S)\!\!\downarrow \subseteq\cR(T)$. For this purpose, we check that each $\blambda\in S$ satisfies the constraints $\lambda_1\le T(\{1\})$, $\lambda_2\le T(\{2\})$, and $\lambda_1+\lambda_2\le T(\{1,2\})$, i.e., $\blambda\in \cR(T)$. Thus, $S\subseteq \cR(T)$. Due to convex property of $\cR(T)$ and since $\cR(T)\!\!\downarrow=\cR(T)$, it can be easily concluded that $\conv(S)\!\!\downarrow \subseteq\cR(T)$. 

Now, for the other direction, we need to show that $\cR(T) \subseteq \conv(S)\!\!\downarrow$. Let $\blambda\in\bbRp^2$ satisfy the constraints $\lambda_1\le T(\{1\})$, $\lambda_2\le T(\{2\})$, and $\lambda_1+\lambda_2\le T(\{1,2\})$. W.l.o.g.\ we assume that at least one of these three inequalities is satisfied with equality, since we could increase $\blambda$ otherwise. If $\lambda_1+\lambda_2=T(\{1,2\})$, then it can be readily concluded that $\blambda\in\conv(S)$ since $\lambda_1\le T(\{1\})$ and $\lambda_2\le T(\{2\})$. Thus, let us now consider the case where $\lambda_1=T(\{1\})$. If $\lambda_2<T(\{2\})$ and $\lambda_1+\lambda_2<T(\{1,2\})$ then we could increase $\blambda$, so that we can assume $\lambda_2<T(\{2\})$ and conclude $\lambda_1+\lambda_2=T(\{1,2\})$ due to the subadditivity of $T$. The case $\lambda_2=T(\{2\})$ can be treated analogously. Thus, ${\cR(T) \subseteq \conv(S) \subseteq \conv(S)\!\!\downarrow}$. 
\end{proof}

\begin{proof}[Proof of Lemma~\ref{lemma_gen_size1_k2}]
The constraints of ILP in Corollary~\ref{cor_ILP_n_lb} are
  \begin{eqnarray*}
    n_2+n_3 &\ge &\lambda_1,\\ 
    n_1+n_3 &\ge &\lambda_2,\\
    n_1+n_2 &\ge &\lambda_1+\lambda_2
  \end{eqnarray*}
and the recovery sets are given by
  \begin{eqnarray*} 
    \cY_1 &=& \Big\{\{2\},\{1,3\}\Big\},\\
    \cY_2 &=& \Big\{\{1\},\{2,3\}\Big\}.
  \end{eqnarray*}
Now, set the parameters as follows
  \begin{eqnarray*}
    \alpha_{\{2\}} &=& \min\left\{n_2,\lambda_1\right\},\\ 
    \alpha_{\{1\}} &=& \min\left\{n_1,\lambda_2\right\},\\
    \alpha_{\{1,3\}} &=& \max\left\{0,\lambda_1-n_2\right\},\\
    \alpha_{\{2,3\}} &=& \max\left\{0,\lambda_2-n_1\right\}
  \end{eqnarray*}          
It should be noted that since $n_1+n_2\ge \lambda_1+\lambda_2$, we cannot have ${n_2<\lambda_1}$ and ${n_1<\lambda_2}$. Thus, it is easy to verify that
  \begin{eqnarray*}
    \alpha_{\{2\}}+\alpha_{\{1,3\}} &=& \lambda_1,\\
    \alpha_{\{1\}}+\alpha_{\{2,3\}} &=& \lambda_2,\\
    \alpha_{\{1\}}+ \alpha_{\{1,3\}}  &\le& n_1,\\  
    \alpha_{\{2\}}+ \alpha_{\{2,3\}}  &\le& n_2,\text{ and}\\
    \alpha_{\{1,3\}}+ \alpha_{\{2,3\}}  &\le& n_3.
  \end{eqnarray*}
Only the latter inequality needs a short case analysis. Let us assume $n_2\ge \lambda_1$ and $n_1\ge\lambda_2$, then $\alpha_{\{1,3\}}+ \alpha_{\{2,3\}}=0\le n_3$. If $n_2<\lambda_1$ and $n_1\ge \lambda_2$, then $\alpha_{\{2,3\}}=0$, $\alpha_{\{1,3\}}=\lambda_1-n_2$, and $\alpha_{\{1,3\}}+\alpha_{\{2,3\}}=\lambda_1-n_2$ which is at most $n_3$ due to $n_2+n_3 \ge \lambda_1$. The other case, that is, $n_2\ge\lambda_1$ and $n_1<\lambda_2$ follows analogously.   
\end{proof}

\begin{proof}[Proof of Theorem~\ref{thm_two_server}]
The proof consists of a converse (lower bound) and an achievability (upper bound).

\ul{Converse}: The desired service rate region $\cR$ is given by:
 \begin{align*}
  \cR=\left\{(\lambda_1,\lambda_2)\in\mathbb{R}_{\ge 0}^2\,:\, 
  \lambda_1\le X,\lambda_2\le Y,\lambda_1+\lambda_2\le \Sigma\right\}.   
  \end{align*}
  
It is easy to see that $\cR=\cR(T)$ for $T\colon 2^{[2]}\to\bbN$ defined as $T(\emptyset)=0$, $T(\{1\})=X$, $T(\{2\})=Y$, and $T(\{1,2\})=\Sigma$. Since  $\max\{X,Y\}\le\Sigma\le X+Y$ holds, so the condition  $\max\{T(\{1\}),T(\{2\})\} \le T(\{1,2\}) \le  T(\{1\})+T(\{2\})$ is satisfied which means that the function $T\colon 2^{[2]}\to\bbN$ is monotone and subadditive. Thus, we can apply Lemma~\ref{lemma_characterization_generating_set_k_2} to obtain the generating set of ${\cR(T)}$ which is given by 
$$S=\Big\{\lambda^{(1)}=(X,\Sigma-X), \lambda^{(2)}=(\Sigma-Y,Y)\Big\}$$
The inequalities~\eqref{ie_hyperplane_constraint} from Lemma~\ref{lemma_hyperplane_constraint} read
  \begin{eqnarray*}
    n_1+n_3 &\ge& X=\max\{X,\Sigma-Y\},\\
    n_2+n_3 &\ge& Y=\max\{Y,\Sigma-X\},\\
    n_1+n_2 &\ge& \Sigma=\max\{\Sigma,\Sigma\},
  \end{eqnarray*} 
so that summing up and dividing by two gives
  $$
    n=n_1+n_2+n_3\ge\frac{X+Y+\Sigma}{2}.
  $$ 
Since $n$ is an integer, we obtain $n(\cR)\ge \left\lceil \frac{X+Y+\Sigma}{2}\right\rceil$.

We could also use Theorem~\ref{theorem_general_lower_bound} for proving the converse. Since $\max\{X,Y\}\le\Sigma\le X+Y$, it can be simply confirmed that none of the constraints $\lambda_1\le X$, $\lambda_2\le Y$, $\lambda_1+\lambda_2\le \Sigma$ are strictly redundant in ${\bbRp^2}$. Thus, by applying Theorem~\ref{theorem_general_lower_bound}, we directly get the stated lower bound.
  
\ul{Achievability}: First, for the ease of notation, let us define $\tfrac{1}{2}\cdot\simplex(\{i,j\})\triangleq\left\{\be_i,\be_j\right\}$ for two different positive integers $i$ and $j$. Note that the cardinality of $\tfrac{L}{2}\cdot\simplex(\{i,j\})$ for some ${L \in \bbZ_{\geq 0}}$, is computed as ${\left\lceil\tfrac{L}{2}\cdot\#\simplex(\{i,j\})\right\rceil=\left\lceil\tfrac{3L}{2}\right\rceil}$ and the service rate region of $\left\{\be_i,\be_j\right\}$ contains the service rate region of the $\simplex(\{i,j\})$ scaled by a factor of $\frac{1}{2}$, i.e.,
\begin{align*}
\cS\!\left( \left\{\be_i,\be_j\right\}\right)=\left\{\blambda\in\bbRp^k\mid \lambda_i\le 1,\lambda_j\le 1\right\}\supseteq 
  \left\{\blambda\in\bbRp^k\mid \lambda_i\le 1,\lambda_j\le 1,\lambda_i+\lambda_j\le 1\right\}.
\end{align*}

For the upper bound on $n(\cR)$, i.e., the constructive part, we need to select the multiplicities of $\be_1$, $\be_2$ and $\be_1+\be_2$ in $\cG$, a multiset of points in $\PG(2-1,2)$, such that $\cS(\cG)\supseteq\cR(T)$. Let $\cG=\cup_{i\in[3]}\cG^{(i)}$ where\vspace{0.2cm}

\begin{itemize}
    \item $\cG^{(1)}$ consists of $\Sigma-Y$ copies of $\simplex(\{1\})$\vspace{0.2cm}
    \item $\cG^{(2)}$ consists of $\Sigma-X$ copies of $\simplex(\{2\})$\vspace{0.2cm}
    \item ${\cG^{(3)}}$ consists of ${\tfrac{L}{2}}$ copies of $\simplex(\{1,2\})$\vspace{0.2cm}
\end{itemize}
where $L=X+Y-\Sigma$. Thus, the cardinality of the multiset $\cG$ is given by
  $$
    \left(\Sigma-Y\right) +\left(\Sigma-X\right)+\left\lceil\frac{3(X+Y-\Sigma)}{2}\right\rceil=\left\lceil\frac{X+Y+\Sigma}{2}\right\rceil.
  $$     
By construction, $\cR(\cG)\supseteq\cR(T)$ for ${T(\emptyset)=0}$, ${T(\{1\})=X}$, ${T(\{2\})=Y}$, and $T(\{1,2\})=\Sigma$, since
\begin{align*}
    \cS(\cG^{(1)})\supseteq\left\{\blambda\in\bbRp^2\mid \lambda_1\le \Sigma-Y,\lambda_2= 0\right\},
\end{align*}
\begin{align*}
    \cS(\cG^{(2)})\supseteq\left\{\blambda\in\bbRp^2\mid \lambda_1= 0,\lambda_2\le \Sigma-X\right\},
\end{align*}
\begin{align*}
    \cS(\cG^{(3)})\supseteq 
  \left\{\blambda\in\bbRp^2\mid \lambda_1\leq L,\lambda_2 \leq L, \lambda_1+\lambda_2\le L\right\}.
\end{align*}
Thus, it can be easily confirmed that 
\begin{align*}
    \cS(\cG)\supseteq 
  \left\{\blambda\in\bbRp^2\mid \lambda_1\leq X,\lambda_2 \leq Y, \lambda_1+\lambda_2\le \Sigma\right\}=\cR.
\end{align*}
That is, the proposed storage scheme $\cG$ obviously can satisfy the demands in $\cR$.
\end{proof}

\bibliographystyle{IEEEtran}
\bibliography{service_rate}

\end{document}